\begin{document}

\newtheorem{theorem}{Theorem}
\newtheorem{lemma}{Lemma}
\newtheorem{proposition}{Proposition}
\newtheorem{cor}{Corollary}
\theoremstyle{definition}
\newtheorem{defn}{Definition}
\newtheorem{remark}{Remark}
\newtheorem{step}{Step}

\newcommand{\Cov}{\mathop {\rm Cov}}
\newcommand{\Var}{\mathop {\rm Var}}
\newcommand{\E}{\mathop {\rm E}}
\newcommand{\const }{\mathop {\rm const }}
\everymath {\displaystyle}

\newcommand{\ruby}[2]{
\leavevmode
\setbox0=\hbox{#1}
\setbox1=\hbox{\tiny #2}
\ifdim\wd0>\wd1 \dimen0=\wd0 \else \dimen0=\wd1 \fi
\hbox{
\kanjiskip=0pt plus 2fil
\xkanjiskip=0pt plus 2fil
\vbox{
\hbox to \dimen0{
\small \hfil#2\hfil}
\nointerlineskip
\hbox to \dimen0{\mathstrut\hfil#1\hfil}}}}

\def\qedsymbol{$\blacksquare$}

\renewcommand{\refname }{References}

\title{
An optimal execution problem with S-shaped market impact functions
\footnote{Forthcoming in {\it Communications on Stochastic Analysis}.}
}

\author{Takashi Kato
\footnote{Association of Mathematical Finance Laboratory (AMFiL), 
              2--10, Kojimachi, Chiyoda, Tokyo 102-0083, Japan, 
E-mail: \texttt{takashi.kato@mathfi-lab.com}}
}

\date{First Version: June 28, 2017\\
This Version: October 2, 2017}

\maketitle 

\begin{abstract}
In this study, we extend the optimal execution problem with convex market impact function studied in Kato \cite {Kato_FS} to the case where 
the market impact function is S-shaped, that is, concave on $[0, \bar {x}_0]$ and convex on $[\bar {x}_0, \infty )$ for some $\bar {x}_0 \geq 0$. 
We study the corresponding Hamilton--Jacobi--Bellman equation and show that 
the optimal execution speed under the S-shaped market impact is equal to zero or larger than $\bar {x}_0$. 
Moreover, we provide some examples of the Black--Scholes model. 
We show that 
the optimal strategy for a risk-neutral trader with small shares is the time-weighted average price strategy 
whenever the market impact function is S-shaped. 
\\\\
{\bf Keywords}: Optimal execution problem, market impact, the Hamilton--Jacobi--Bellman equation, time-weighted average price (TWAP) 
\end{abstract}

\everymath {\displaystyle}

\section{Introduction}
Optimal execution problems have been widely investigated in mathematical finance as a type of stochastic control problem. 
There are various studies of optimal execution, such as \cite {Alfonsi-Fruth-Schied, Almgren-Chriss, Bertsimas-Lo, Gatheral-Schied_AC1, Obizhaeva-Wang} and references therein, 
and Gatheral and Schied \cite {Gatheral-Schied} survey several dynamic models of optimal execution. 
To study this type of problem, we cannot ignore market impact (MI), which is a market liquidity problem. 
Here, we consider a situation where a single trader has many shares of a security and tries to sell (liquidate) it until a time horizon. 
A large selling order induces a gap between supply and demand, causing a decrease in the security price. 
This effect is called the MI, and the trader should reduce the liquidation speed to avoid the MI cost. 
However, reducing the liquidation speed also increases the timing cost, which is caused by the random fluctuation of the security price over time. 
The trader should optimize the execution strategy by considering the MI cost and the timing cost. 
Therefore, the MI function, $g(x)$, plays an important role in studying optimal execution problems. 
Here, $g(x)$ implies the decrease of the security price by selling $x$ shares (or selling rate). 

The simplest setting for $g$ is a linear function. 
For instance, in \cite {Almgren-Chriss, Bertsimas-Lo, Obizhaeva-Wang}, optimal execution problems are treated mainly with linear MI functions and derive optimal execution strategies. 
However, there are studies on optimization problems with non-linear $g$ \cite {Alfonsi-Fruth-Schied, Gueant, Ishitani-Kato_COSA1, Ishitani-Kato_COSA2, Kato_FS, Kato_TJSIAM}. 
In particular, we derive a mathematically adequate continuous-time model of an optimal execution problem as a limit of discrete-time optimization problems in \cite {Kato_FS} when $g$ is strictly convex. 

It is still unclear what form of $g$ is natural. 
Recently, it has been proposed that an S-shaped function is suitable for $g$. 
That is, $g(x)$ should be concave on $[0, \bar {x}_0]$ and convex on $[\bar {x}_0, \infty )$ for some $\bar {x}_0 \geq 0$. 
Many traders intuitively expect that MI functions are S-shaped \cite {Kato_TJSIAM}. 
Moreover, in \cite {Rosu}, we find an empirical prediction for a hump-shaped limit order book, which corresponds to the S-shaped MI function. 
Therefore, our previous study \cite {Kato_FS} should be extended to include the S-shaped MI function, $g$. 
We have tackled this problem partially in \cite {Kato_TJSIAM}, but there are still many mathematical and financial questions at this stage. 
For instance, when we consider the optimal execution problem with S-shaped $g$, 
it is intuitive that the optimal execution speed should not reach the range $(0, \bar {x}_0]$. 
However, we have not proved this finding mathematically. 
Moreover, we have not discussed the Hamilton--Jacobi--Bellman (HJB) equations corresponding to our optimization problem sufficiently. 

In this paper, we resolve these questions as a continuation of our previous study \cite {Kato_TJSIAM}. 
We completely generalize our previous results \cite {Kato_FS} to the case of S-shaped $g$ and study the above questions. 
Moreover, we find that the optimal execution strategy of a risk-neutral trader in the Black--Scholes market model is 
the time-weighted average price (TWAP) strategy, that is, to sell at a constant speed. 
This is the same result as Theorem 5.4(ii) in \cite {Kato_FS} when $g$ is a quadratic function; 
however, it is not necessary to assume an explicit form of $g$. 
We show that this result is true whenever $g$ is S-shaped. 
This result generalizes Theorem 5.4(ii) in \cite {Kato_FS} 
and provides an analytical solution to the optimal execution problem with an uncertain MI given in Section 5.2 of \cite {Ishitani-Kato_COSA2}. 

The rest of this paper is as follows. 
In Section \ref {sec_model}, we introduce a mathematical model of an optimization problem based on our previous work \cite {Kato_TJSIAM} and review the previous results. 
In Section \ref {sec_viscosity}, we characterize our value function as a viscosity solution to the corresponding HJB equation. 
We show the uniqueness of the viscosity solutions to the HJB equation under adequate conditions. 
To investigate the properties of optimal strategies, we introduce a verification theorem and show that the optimal execution strategy does not take the value in $(0, \bar{x}_0]$ in Section \ref {sec_verification}. 
In Section \ref {sec_eg}, we present some examples. 
In particular, we demonstrate the robustness of the TWAP strategy as an optimal strategy in the Black--Scholes model with general shaped MI functions. 
We summarize our argument and introduce future tasks in Section \ref {sec_conclusion}. 
Section \ref {sec_supplement} gives supplemental arguments to guarantee consistency between our present model and our previous model \cite {Kato_TJSIAM}. 
All proofs are in Section \ref {sec_proof}.

\section{Model Settings}\label{sec_model}

Let $(\Omega , \mathcal {F}, (\mathcal {F}_t)_{0\leq t\leq T}, P)$ be a stochastic basis and let $(B_t)_{0\leq t\leq T}$ be a 
one-dimensional Brownian motion ($T > 0$). 
Set $D = \mathbb {R}\times [0, \infty )^2$ and denote by 
$\mathcal {C}$ the set of non-decreasing, non-negative, and continuous functions with polynomial growth defined on $D$. 
For $u\in \mathcal {C}$, 
we define a function, $J(\cdot \ ; u) : [0, T]\times D\longrightarrow \mathbb {R}$, as
\begin{eqnarray}\label{def_J}
J(t, c, x, s ; u) = \sup _{(x_r)_r\in \mathcal {A}_t(x)}\E [u(C_t, X_t, S_t)], 
\end{eqnarray}
where $(C_r)_r, (X_r)_r$, and $(S_r)_r$ are stochastic processes given by 
\begin{eqnarray}\label{notation_SDE}\nonumber 
dC_r &=& x_rS_rdr, \\\nonumber 
dX_r &=& -x_rdr, \\
dS_r &=& \hat{b}(S_r)dr + \hat{\sigma }(S_r)dB_r - S_rg(x_r)dr, 
\end{eqnarray}
and $(C_0, X_0, S_0) = (c, x, s)$, and 
$\mathcal {A}_t(x)$ is the set of non-negative $(\mathcal {F}_r)_r$-progressively measurable process, $(x_r)_{0\leq r\leq t}$, 
satisfying $\int ^t_0x_rdr \leq x$ a.s. 
We call an element of $\mathcal {A}_t(x)$ an admissible strategy. 
Here, $\hat{b}$ and $\hat{\sigma }$ are defined as 
\begin{eqnarray*}
\hat{b}(s) = \left( b(\log s) + \frac{1}{2}\sigma (\log s)^2\right) s, \ \ \hat{\sigma }(s) = \sigma (\log s)s, \ \ s > 0
\end{eqnarray*}
and $\hat{b}(0) = \hat{\sigma }(0) = 0$, where 
$b, \sigma : \mathbb {R}\longrightarrow \mathbb {R}$ are bounded and Lipschitz continuous functions. 
$g\in C([0, \infty ))\cap C^1((0, \infty ))$ is a non-negative function with $g(0) = 0$. 

Function $J$ implies the value function of an optimal execution problem with MI function $g$ and 
is derived as a limit of discrete-time value functions in \cite {Kato_FS} when $g$ is convex, 
and in \cite {Kato_TJSIAM} when $g$ is S-shaped, that is, when $h := g'$ satisfies the following conditions.
\begin{itemize}
 \item [{[A1]}] $h(x)\geq 0$, $x > 0$.
 \item [{[A2]}] $\lim _{x\rightarrow 0}xh(x) = 0$. 
 \item [{[A3]}] There is an $\bar {x}_0\geq 0$ such that $h$ is strictly decreasing on $(0, \bar {x}_0]$ and strictly increasing on $[\bar {x}_0, \infty )$. 
 \item [{[A4]}] $h(\infty ) = \lim _{x\rightarrow 0}h(x) = \infty $. 
\end{itemize}
Condition [A3] implies that $g$ is concave on $[0, \bar {x}_0]$ and convex on $[\bar {x}_0, \infty )$. 
In this paper, we always assume [A1]--[A4]. 

We briefly introduce the financial implications of our model 
(see \cite {Ishitani-Kato_COSA1, Ishitani-Kato_COSA2, Kato_FS} for more details). 
We assume that there is a single trader who has many shares $x_0$ of a security whose price is $s_0$ at the initial time. 
The trader tries to sell the security in the market until time horizon $T$, but the selling behavior affects the security price via the effect of MI (denoted as the term $-g(x_r)dr$ in (\ref {SDE_Y})). 
$S_r$ is the security price at time $r$, and $C_r$ (resp., $X_r$) describes the cash amount (resp., shares of the security) held at time $r$. 
The trader's purpose is to maximize the terminal expected utility, $J(T, c_0, x_0, s_0 ; u) = \E [u(C_T, X_T, S_T)]$, by 
controlling an execution strategy, $(x_r)_r\in \mathcal {A}_T(x)$. 
Here, $x_r$ implies the liquidation speed at time $r$; in other words, the trader sells $x_rdr$ amount in the infinitesimal time interval $[r, r + dr]$. 
To solve this problem, we introduce the value function $J(t, c, x, s ; u)$ for each $t, c, x$, and $s$ to apply the dynamic programming method. 

\begin{remark} \ 
\begin{itemize}
 \item [(i)] The log-price process $Y_r = \log S_r$ satisfies the stochastic differential equation (SDE), 
\begin{eqnarray}\label{SDE_Y}
dY_r = b(Y_r)dr + \sigma (Y_r)dB_r - g(x_r)dr, 
\end{eqnarray}
whenever $S_r > 0$, $r \geq 0$. 
 \item [(ii)] In stochastic control theory,
$(x_r)_r\in \mathcal {A}_t(x)$ is called a control process and $(C_r, X_r, S_r)_r$ defined in (\ref {notation_SDE}) is its controlled process. 
However, the existence and uniqueness of $(C_r, X_r, S_r)_r$ for each $(x_r)_r$ is not obvious in our case, 
because $(Y_r)_r$ may diverge due to the term $-g(x_r)dr$. 
We can overcome this difficulty by regarding $S_r = 0$ after $Y_r$ diverges to $-\infty $ (see Section \ref {sec_supplement} for details). 
 \item [(iii)] In \cite {Ishitani-Kato_COSA1, Kato_FS, Kato_TJSIAM}, 
we require an additional assumption such that each admissible strategy is essentially bounded; that is, we consider the optimization problem 
\begin{eqnarray}\label{def_J_infty}
J^\infty (t, c, x, s ; u) = \sup _{(x_r)_r\in \mathcal {A}^\infty _t(x)}\E [u(C_t, X_t, S_t)] 
\end{eqnarray}
instead of (\ref {def_J}), where 
\begin{eqnarray*}
\mathcal {A}^\infty _t(x) = \left\{ (x_r)_r\in \mathcal {A}_t(x)\ ; \ \mathop {\rm esssup}_{r, \omega }x_r(\omega ) < \infty   \right\} . 
\end{eqnarray*}
This condition arises in the process of taking the limit from the discrete-time model to the continuous-time model; 
however, it is a mathematical technical condition and is unnatural in relation to finance. 
We can show that $J$ coincides with $J^\infty $, and thus 
we are not overly concerned about this problem (also see Section \ref {sec_supplement}). 
\end{itemize}
\end{remark}

In \cite {Kato_TJSIAM}, we show that $J(\cdot \ ; u)$ is continuous on $[0, T]\times D$ and 
$J(r, \cdot \ ; u)\in \mathcal {C}$ for each $r\geq 0$ and $u\in \mathcal {C}$. 
Moreover, $J$ satisfies the dynamic programming principle,
\begin{eqnarray*}
J(t+r, c, x, s ; u) = J(t, c, x, s ; J(r, \cdot ; u)),
\end{eqnarray*}
for each $(c, x, s)\in D$, $u\in \mathcal {C}$ and $t, r\geq 0$ with $t + r\leq T$. 

By using these results, we characterize $J$ as a viscosity solution of the corresponding HJB equation in the next section. 
From now on, we fix $u\in \mathcal {C}$ and denote $J(t, c, x, s ; u) = J(t, c, x, s)$ for brevity. 

\section{Main Results I: Viscosity Properties}\label{sec_viscosity}

Our first main result is as follows. 

\begin{theorem}\label{th_viscosity} \ 
\begin{itemize}
 \item [$({\rm i})$] We assume that 
\begin{eqnarray}\label{ass_viscosity}
\liminf _{\varepsilon \rightarrow 0}\frac{1}{\varepsilon } 
\left( J(t, c, x, s + \varepsilon ) - J(t, c, x, s)\right) > 0, \ \ 
(t, c, x, s)\in (0, T]\times \tilde{D}, 
\end{eqnarray}
where $\tilde {D} = {\rm int}D = \mathbb {R}\times (0, \infty )^2$. 
Then, $J$ is a viscosity solution of the following HJB equation on $(0, T]\times \tilde{D}$:
\begin{eqnarray}\label{HJB}
\frac{\partial }{\partial t}J - \sup _{y\geq 0}\mathscr {L}^yJ = 0, 
\end{eqnarray}
where 
\begin{eqnarray*}
\mathscr {L}^y = (\hat{b}(s) - sg(y))\frac{\partial }{\partial s} + 
\frac{1}{2}\hat{\sigma }(s)^2\frac{\partial ^2}{\partial s^2} + 
y\left( s\frac{\partial }{\partial c} - \frac{\partial }{\partial x} \right) . 
\end{eqnarray*}
 \item [$({\rm ii})$] We assume $(\ref {ass_viscosity})$, that $\hat{b}$ and $\hat{\sigma }$ are Lipschitz continuous, and that $\liminf _{x\rightarrow \infty }\allowbreak h(x)/x > 0$. 
Then, we see the uniqueness of a viscosity solution of $(\ref {HJB})$ in the following sense.
If a continuous function $v: [0, T]\times D\longrightarrow \mathbb {R}$ with polynomial growth 
is a viscosity solution of $(\ref {HJB})$ and satisfies the boundary conditions, 
\begin{eqnarray}\label{boundary_cond1}
v(0, c, x, s) &=& u(c, x, s), \\\label{boundary_cond2}
v(t, c, 0, s) &=& \E [u(c, 0, Z_t(s))], \\\label{boundary_cond3}
v(t, c, x, 0) &=& u(c, x, 0), 
\end{eqnarray}
then it holds that $J = v$, where 
$(Z_r(s))_r$ is a unique solution to the SDE: 
\begin{eqnarray}\label{Z}
dZ_r(s) = \hat{b}(Z_r(s))dr + \hat{\sigma }(Z_r(s))dB_r, \ \ Z_0(s) = s. 
\end{eqnarray}
\end{itemize}
\end{theorem}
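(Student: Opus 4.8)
The plan is to prove Theorem~\ref{th_viscosity} in two separate pieces, handling the viscosity-solution property in part~(i) and the comparison/uniqueness in part~(ii).

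\textbf{Part (i).}
First I would verify that $J$ is a viscosity \emph{subsolution} and \emph{supersolution} of \eqref{HJB} separately, using the dynamic programming principle recalled just above the theorem, namely $J(t+r,c,x,s)=J(t,c,x,s;J(r,\cdot\,;u))$. The continuity of $J$ on $[0,T]\times D$ is already known, so the only substantive work is the local comparison with smooth test functions. For the supersolution property, I would fix a test function $\varphi\in C^{1,2}$ touching $J$ from below at an interior point $(t_0,c_0,x_0,s_0)$, freeze an arbitrary constant control $x_r\equiv y\geq 0$ on a short interval $[t_0-\delta,t_0]$ (using $t$ as the ``time to maturity'' convention built into the SDE \eqref{notation_SDE}), apply the DPP, subtract $\varphi$, divide by the interval length, and let it shrink to zero; It\^o's formula then yields $\partial_t\varphi - \mathscr{L}^y\varphi\geq 0$ at the touching point, and taking the supremum over $y$ gives the supersolution inequality. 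The subsolution property is the reverse argument with a test function touching from above: here one uses near-optimal controls in the DPP, and the role of assumption \eqref{ass_viscosity} is to rule out pathologies where the optimal control blows up — it guarantees that $J$ is genuinely increasing in $s$ so that the controlled dynamics stay in a regime where the relevant estimates (boundedness of the controlled process on a short time window, moment bounds following from [A2] and polynomial growth) hold. One technical point worth isolating is justifying that one may restrict to \emph{bounded} controls when extracting the subsolution inequality; this is exactly the equivalence $J=J^\infty$ mentioned in Remark~1(iii), so I would invoke it here.

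\textbf{Part (ii).}
The uniqueness is a comparison principle: if $v$ is a viscosity solution with polynomial growth satisfying the boundary conditions \eqref{boundary_cond1}--\eqref{boundary_cond3}, then $v\equiv J$. The strategy is the standard doubling-of-variables argument of Crandall--Ishii--Lions, but adapted to the unbounded domain $D$ and the unbounded Hamiltonian (the supremum over $y\geq 0$ in \eqref{HJB}). I would first reduce to proving $v\leq J$ and $v\geq J$ by showing each of $v,J$ is both sub- and supersolution (for $J$ this is part~(i), using \eqref{ass_viscosity}). Then, to compare a subsolution $v_1$ with a supersolution $v_2$, I would introduce the penalized function
\begin{eqnarray*}
\Phi(t,c,x,s,t',c',x',s') = v_1(t,c,x,s) - v_2(t',c',x',s') - \frac{1}{2\eta}\big(|t-t'|^2+|c-c'|^2+|x-x'|^2+|s-s'|^2\big) - \lambda\psi(c,x,s) - \kappa t,
\end{eqnarray*}
where $\psi$ is a smooth function of polynomial growth dominating the growth of $v_1,v_2$ (so that the supremum of $\Phi$ is attained), $\lambda,\kappa>0$ are small parameters sent to $0$ at the end, and $\eta\to 0$ forces the doubled variables together. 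Applying the Crandall--Ishii lemma at the maximum point, subtracting the two viscosity inequalities, and using the Lipschitz continuity of $\hat b,\hat\sigma$ (assumed in part~(ii)) to control the second-order terms in $s$, and Lipschitz-type control of the drift part $(\hat b(s)-sg(y))\partial_s + y(s\partial_c-\partial_x)$ in the remaining variables, I expect to reach a contradiction unless $v_1\leq v_2$ on the closed domain, given the boundary and terminal data agree.

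\textbf{Main obstacle.}
The hard part will be controlling the $\sup_{y\geq 0}$ in the Hamiltonian, because the coefficient $-sg(y)\partial_s + y(s\partial_c-\partial_x)$ is \emph{unbounded in $y$}, so the usual Lipschitz-in-the-gradient structure of the Hamiltonian fails and the doubling argument does not close automatically. This is precisely where the extra hypothesis $\liminf_{x\to\infty} h(x)/x > 0$ enters: it forces $g(y)$ to grow at least quadratically, so that the ``gain'' term $y\cdot s\,\partial_c$ from selling fast is overwhelmed by the MI ``loss'' term $-sg(y)\partial_s$ (recall $\partial_s v > 0$ by \eqref{ass_viscosity}-type monotonicity), which means the relevant supremum over $y$ is effectively attained on a \emph{compact} set of $y$'s whose size is controlled by the size of $(c,x,s)$ and the gradient bounds from $\psi$. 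Making this reduction to a compact control set rigorous — uniformly in the penalization parameters — is the delicate estimate; once it is in place, the Hamiltonian becomes locally uniformly continuous in the frozen variables and the classical comparison machinery applies. I would organize this as a preliminary lemma: ``under [A1]--[A4] and $\liminf h(x)/x>0$, for any $R>0$ there is $M_R<\infty$ such that the supremum in \eqref{HJB} restricted to $|(c,x,s)|\leq R$ and gradients bounded by $R$ is attained at some $y\leq M_R$,'' and then feed it into the standard proof.
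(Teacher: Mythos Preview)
Your supersolution argument for part~(i) is standard and matches the paper (its Proposition~\ref{prop_supersol} is stated as routine). The subsolution argument, however, diverges from the paper and is underspecified at the key point. The difficulty is not that near-optimal controls might leave a ``regime where estimates hold''; it is that the Hamiltonian $H(s,p)=\inf_{y\ge 0}\{sp_sg(y)-(sp_c-p_x)y\}$ equals $-\infty$ whenever $p_s\le 0$, so the equation is singular and a direct near-optimal-control argument does not close without further work. The paper's device (Proposition~\ref{prop_subsol}) is instead a \emph{stability} argument: introduce the truncated value functions $J^L$ with controls bounded by $L$; each $J^L$ is a viscosity solution of the bounded-control HJB $\partial_t J^L+F^L=0$ by classical theory, and $J^L\nearrow J$, hence locally uniformly by Dini. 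One shows the maximizers $(t_L,z_L)$ of $J^L-\varphi$ on a small ball converge to the original touching point $(t,z)$, and then passes to the limit in the subsolution inequality for $J^L$. The precise role of \eqref{ass_viscosity} is this: since $\varphi\ge J$ near $(t,z)$ with equality there, the strict positivity of the lower $s$-increment of $J$ forces $\partial_s\varphi(t,z)>0$, hence $\partial_s\varphi(t_L,z_L)>0$ for large $L$; this places $(z_L,\mathscr D\varphi(t_L,z_L),\mathscr D^2\varphi(t_L,z_L))$ in the region $\mathscr R=\{p_s>0\}$, on which Lemma~\ref{lem_est_H} gives an explicit finite formula for $H$ and shows $F$ is continuous, so $F^L\to F$ locally uniformly on $\mathscr R$ (Dini again) and the limit goes through. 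Your appeal to $J=J^\infty$ does not replace this step, because essentially bounded controls are still not \emph{uniformly} bounded across strategies.

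For part~(ii) your outline is in the right spirit and consistent with the paper, which simply invokes Propositions~B.21--B.23 of \cite{Kato_FS} after observing that those proofs need convexity of $g$ only on some tail $[x_1,\infty)$, which [A3]--[A4] provide. Your reading of $\liminf_{x\to\infty}h(x)/x>0$ as yielding at-least-quadratic growth of $g$, hence coercivity of the Hamiltonian and an effective compactification of the control set inside the doubling argument, is exactly the mechanism that makes the comparison go through.
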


\begin{remark} \ 
\begin{itemize}
 \item [(i)] The assertions of Theorem \ref {th_viscosity} are the same as those of Theorems 3.3 and 3.6 in \cite {Kato_FS}. 
Thus, Theorem \ref {th_viscosity} was already obtained when $g$ is convex, namely, when $\bar {x}_0 = 0$. 
As mentioned in Remark 3.7 of \cite {Kato_FS}, 
our HJB equation (\ref {HJB}) does not satisfy standard assumptions to apply a standard argument to viscosity characterization discussed in, for instance, 
\cite {DaLio-Ley, Fleming-Soner, Krylov, Nagai}. 
We demonstrate Theorem \ref {th_viscosity}(i) through a refinement of the proof of Theorem 3.3 in \cite {Kato_FS}. 
In the proof of Theorem 3.6 in \cite {Kato_FS}, 
we do not use the convexity of $g$ mainly, so Theorem \ref {th_viscosity}(ii) is obtained in a similar way to the proof of Propositions B.21--B.23 in \cite {Kato_FS}. 
 \item [(ii)] The following condition is a standard natural condition for a utility function in mathematical finance: 
\begin{itemize}
 \item [{[B]}] $u(c, x, s) = U(c)$ for some concave function $U\in C^1(\mathbb {R})$. 
\end{itemize}
Under [B], the boundary conditions (\ref {boundary_cond1})--(\ref {boundary_cond3}) are simplified as
\begin{eqnarray*}
v(0, c, x, s) = v(t, c, 0, s) = v(t, c, x, 0) = U(c). 
\end{eqnarray*}
 \item [(iii)] It is not easy to check (\ref {ass_viscosity}) in general. 
When $g$ is convex, the natural and simple sufficient conditions of (\ref {ass_viscosity}) are introduced in \cite {Kato_FS} as 
\begin{itemize}
 \item [{[C1]}] $u$ satisfies [B]. 
Moreover, it holds that $U'(c)\geq \delta $, $c\in \mathbb {R}$, for some $\delta > 0$. 
 \item [{[C2]}] $b$ and $\sigma $ are differentiable and their derivatives are Lipschitz continuous and uniformly bounded. 
\end{itemize}
In our case, by the same proof as for Proposition 3.5 in \cite {Kato_FS}, 
we also verify that (\ref {ass_viscosity}) holds under [C1]--[C2]. 
\end{itemize}
\end{remark}

\section{Main Results II: Verification Arguments}\label{sec_verification}

Theorem 7.4 in \cite {Kato_TJSIAM} gives us a typical example where 
an optimal execution strategy takes the value zero or larger than $\bar {x}_0$. 
This result is consistent with financial intuition, such as selling with the speed in the range of the concave part of $g$ (i.e., $(0, \bar {x}_0]$) 
induces superfluous transaction cost. 
In this section, we present a verification theorem to demonstrate that the
optimal execution speed is in $\{0\}\cup (\bar {x}_0, \infty )$ in general. 

First, we introduce notation to state our second main result. 
Conditions [A3] and [A4] show that there is an inverse function, $h^{-1} : [h(\bar {x}_0), \infty ) \longrightarrow [\bar {x}_0, \infty )$, of $h$. 
Then we define 
\begin{eqnarray}\label{def_H}
\mathcal {H}(s, p) &=& 
\frac{sp_c - p_x}{sp_s}1_{\{sp_s > 0\}}, \\\label{def_Xi}
\Xi (s, p) &=& h^{-1}(\mathcal {H}(s, p))1_{\Lambda }(s, p) 
\end{eqnarray}
for $s\geq 0$ and $p = (p_c, p_x, p_s)'\in \mathbb {R}^3$, where 
\begin{eqnarray*}
\Lambda = \{ (s, p)\in (0, \infty )\times \mathbb {R}^3 &;& 
p_s> 0, \mathcal {H}(s, p) > h(\bar {x}_0), \\
&& g(h^{-1}(\mathcal {H}(s, p))) < \mathcal {H}(s, p)h^{-1}(\mathcal {H}(s, p)) \},
\end{eqnarray*}
and $A'$ denotes the transpose of $A$. 
Moreover, for each continuously differentiable function, $v : [0, T] \times D\longrightarrow \mathbb {R}$, we define 
\begin{eqnarray*}
\bar{b}^v(t, c, x, s) = 
\left(
\begin{array}{c}
 s\Xi (s, \mathscr {D}v( T - t, c, x, s ))\\
 -\Xi (s, \mathscr {D}v( T - t, c, x, s ))\\
 \hat{b}(s) - sg(\Xi (s, \mathscr {D}v( T - t, c, x, s )))
\end{array}
\right), 
\end{eqnarray*}
where $\mathscr{D} = \left (\frac{\partial }{\partial c}, \frac{\partial }{\partial x}, \frac{\partial }{\partial s}\right )'$. 

Now, we present our second main result. 

\begin{theorem}\label{th_S-shaped}We assume {\rm (\ref {ass_viscosity})} and {\rm [B]}, that $J\in C^{1,1,1,2}((0, T]\times D)$, and that for given $(c_0, x_0, s_0)\in \tilde{D}$, there is a continuous process, $(\overline {C}_t, \overline {X}_t, \overline {S}_t)_t$, which satisfies 
\begin{eqnarray}\label{SDE_S-shaped}
d\left(
\begin{array}{c}
 \overline {C}_t\\
 \overline {X}_t\\
 \overline {S}_t
\end{array}
\right) = 
\bar{b}^J(t, \overline {C}_t, \overline {X}_t, \overline {S}_t)dt + 
\left(
\begin{array}{c}
 0\\
 0\\
 \hat{\sigma }(\overline {S}_t)
\end{array}
\right)dB_t, \ \ t\in [0, \bar{\tau }] 
\end{eqnarray}
and $(\overline {C}_0, \overline {X}_0, \overline{S}_0) = (c_0, x_0, s_0)$, 
where 
\begin{eqnarray*}
\bar{\tau } = \inf \{ t\geq 0\ ; \ (\overline {C}_t, \overline {X}_t, \overline {S}_t)\in \partial D \} \wedge T. 
\end{eqnarray*}
Then, there is an optimizer, $(\hat{x}_t)_t$ to $J(T, c_0, x_0, s_0)$, such that 
$\hat{x}_t\in \{ 0 \}\cup (\bar {x}_0, \infty )$, $t\in [0, T]$ a.s. 
\end{theorem}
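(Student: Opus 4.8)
The plan is to use the standard verification machinery of stochastic control, adapted to our HJB equation, to show that the process $(\overline{C}_t,\overline{X}_t,\overline{S}_t)_t$ constructed from the candidate feedback control $\hat{x}_t := \Xi(\overline{S}_t, \mathscr{D}J(T-t,\overline{C}_t,\overline{X}_t,\overline{S}_t))$ is in fact optimal for $J(T,c_0,x_0,s_0)$, and then to observe directly from the definition of $\Xi$ that this control takes values only in $\{0\}\cup(\bar x_0,\infty)$. The last observation is essentially built in: by \eqref{def_Xi}, $\Xi(s,p)$ is either $0$ (when $(s,p)\notin\Lambda$) or equal to $h^{-1}(\mathcal{H}(s,p))$ with $\mathcal{H}(s,p)>h(\bar x_0)$, and since $h^{-1}$ maps $(h(\bar x_0),\infty)$ into $(\bar x_0,\infty)$ by [A3], we indeed get $\hat{x}_t\in\{0\}\cup(\bar x_0,\infty)$. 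So the real content is the optimality.

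First I would set $w(t,c,x,s) = J(T-t,c,x,s)$, so that by Theorem \ref{th_viscosity}(i) and the assumed regularity $J\in C^{1,1,1,2}$, $w$ is a classical solution of $\frac{\partial}{\partial t}w + \sup_{y\geq 0}\mathscr{L}^y w = 0$ on $[0,T)\times\tilde D$ with $w(T,\cdot)=u$, and with the boundary behavior dictated by \eqref{boundary_cond2}--\eqref{boundary_cond3} (using [B] these reduce to $U(c)$). The key pointwise fact is that, for each fixed $(s,p)$ with $p_s>0$, the map $y\mapsto \mathscr{L}^y$-symbol, i.e. $\Phi(y):=y(sp_c-p_x) - sg(y)p_s$, is maximized over $y\geq 0$ either at $y=0$ or at an interior point where $\Phi'(y)=0$, i.e. $h(y)=\mathcal{H}(s,p)$; and comparing the value $\Phi(h^{-1}(\mathcal{H}(s,p)))$ against $\Phi(0)=0$ is exactly the condition $g(h^{-1}(\mathcal{H}))<\mathcal{H}h^{-1}(\mathcal{H})$ appearing in the definition of $\Lambda$. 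Thus $\Xi(s,p)$ is precisely an argmax of $y\mapsto\Phi(y)$, so along the controlled path the feedback control $\hat x_t$ attains the supremum in \eqref{HJB}: $\frac{\partial}{\partial t}w + \mathscr{L}^{\hat x_t} w = 0$ at $(t,\overline{C}_t,\overline{X}_t,\overline{S}_t)$ for $t<\bar\tau$. I would record the fact that $\Xi(s,\mathscr{D}w)$ genuinely achieves the sup (and not merely a critical point) as a short lemma, exploiting [A3]--[A4] to rule out the supremum being attained only in the limit $y\to\infty$ — here $\liminf_{x\to\infty}h(x)/x>0$ is what forces $\Phi(y)\to-\infty$.

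Next comes the verification estimate proper. Apply Itô's formula to $w(t,\overline{C}_t,\overline{X}_t,\overline{S}_t)$ on $[0,\bar\tau]$: the drift term is $(\frac{\partial}{\partial t}w+\mathscr{L}^{\hat x_t}w)\,dt=0$, so $w(t\wedge\bar\tau,\cdots)$ is a local martingale; after a localization and using the polynomial growth of $J$ together with moment bounds for $(\overline{C},\overline{X},\overline{S})$ (finite since $\hat\sigma,\hat b$ are assumed Lipschitz, so the SDE \eqref{SDE_S-shaped} is well-posed with all moments) one upgrades this to a true martingale, giving $w(0,c_0,x_0,s_0)=\E[w(\bar\tau,\overline{C}_{\bar\tau},\overline{X}_{\bar\tau},\overline{S}_{\bar\tau})]$. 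On $\{\bar\tau=T\}$ the right side is $\E[u(\overline{C}_T,\overline{X}_T,\overline{S}_T)]$ since $w(T,\cdot)=u$; on $\{\bar\tau<T\}$ one reaches $\partial D$, and there the boundary conditions \eqref{boundary_cond1}--\eqref{boundary_cond3} — which $J$ satisfies — let one continue the identity to time $T$ (e.g. on $\{\overline{X}_{\bar\tau}=0\}$ the remaining dynamics are just the uncontrolled SDE \eqref{Z} and $w$ equals $\E[u(c,0,Z_\cdot)]$; on $\{\overline{S}_{\bar\tau}=0\}$ the price stays at $0$). Hence $J(T,c_0,x_0,s_0)=w(0,c_0,x_0,s_0)=\E[u(\overline{C}_T,\overline{X}_T,\overline{S}_T)]$, which, since $(\hat x_t)_t\in\mathcal{A}_T(x_0)$, exhibits $(\hat x_t)_t$ as an optimizer. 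For a general admissible $(x_r)_r$ the same Itô computation gives a $\leq$ inequality (the drift is $\leq 0$ because $\sup_y\mathscr{L}^y$ dominates $\mathscr{L}^{x_r}$), confirming $w(0,c_0,x_0,s_0)\geq\E[u(C_T,X_T,S_T)]$ and closing the loop; but in fact we only need the attained value, since $J$ is already known to be the value function.

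I expect the main obstacle to be the boundary analysis at $\bar\tau$: the set $D=\mathbb{R}\times[0,\infty)^2$ has corners and two qualitatively different boundary pieces ($\{X=0\}$ and $\{S=0\}$), the price component may in principle reach $0$ only asymptotically (recall Remark (ii): $Y_r=\log S_r$ can diverge to $-\infty$), and one must check that the martingale identity survives the localization up to $T$ rather than only up to $\bar\tau$, i.e. that gluing $w$ across $\partial D$ via \eqref{boundary_cond1}--\eqref{boundary_cond3} is legitimate under only the polynomial-growth and continuity hypotheses. A secondary technical point is justifying the uniform integrability needed to pass from the local martingale to a true martingale, which relies on combining polynomial growth of $J$ with $L^p$-bounds on $\sup_{t\leq T}(|\overline{C}_t|+|\overline{X}_t|+|\overline{S}_t|)$; since $\hat x_t = \Xi(\cdots)$ need not be bounded a priori, one should argue these bounds either from the explicit form of $\bar b^J$ plus the Lipschitz assumption on $\hat b,\hat\sigma$, or by a stopping-time approximation. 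The interior argument — that $\Xi$ realizes the supremum in the HJB equation — is comparatively routine given [A1]--[A4] and the structure of $\mathscr{L}^y$.
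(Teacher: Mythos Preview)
Your approach is essentially the paper's: define $\hat{x}_t=\Xi(\overline{S}_t,\mathscr{D}J(T-t,\overline{C}_t,\overline{X}_t,\overline{S}_t))1_{[0,\bar\tau)}(t)$, use that $\Xi$ realizes the supremum in the HJB equation (this is exactly Lemma~\ref{lem_est_H} in the paper), apply It\^o with a localization $\hat\tau_R$ to obtain $\E[J(T-\hat\tau_R,\hat C_{\hat\tau_R},\hat X_{\hat\tau_R},\hat S_{\hat\tau_R})]=J(T,c_0,x_0,s_0)$, pass to the limit, and read off the range of $\hat x_t$ from the definition of $\Xi$.

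Two small corrections. First, you invoke $\liminf_{x\to\infty}h(x)/x>0$ to force $\Phi(y)\to-\infty$, but that hypothesis is \emph{not} part of Theorem~\ref{th_S-shaped}; it appears only in Theorem~\ref{th_viscosity}(ii). Condition [A4] alone suffices: $h(\infty)=\infty$ gives $g(y)/y\to\infty$, hence $\Phi(y)\to-\infty$ when $p_s>0$. Second, you write that $\hat b,\hat\sigma$ are ``assumed Lipschitz'' to get well-posedness and moments of \eqref{SDE_S-shaped}; again, that Lipschitz assumption is not in force here. Existence of $(\overline C,\overline X,\overline S)$ is part of the \emph{hypothesis} of the theorem, and the needed moment bounds for the localization come from the comparison $0\le\hat S_t\le Z_t(s_0)$ and \eqref{est_Z_hat}, not from Lipschitz continuity.

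On the boundary step, where you correctly anticipate the main difficulty, the paper does not argue by ``gluing'' via the boundary conditions \eqref{boundary_cond2}--\eqref{boundary_cond3}. Instead it extends $\hat x_t$ by $0$ past $\bar\tau$ and proves a dedicated lemma (Lemma~\ref{lem_S-shaped}) computing $\partial_c J,\partial_x J,\partial_s J,\partial_t J$ on $\{x=0\}$ and $\{s=0\}$ under [B]; the nontrivial ingredient is the inequality $\partial_x J(t,c,0,s)\ge sU'(c)$, which guarantees that at $\{x=0\}$ the supremum $\sup_{y\ge0}\mathscr{L}^yJ$ is actually attained at $y=0$. This makes the It\^o integrand identically zero on $[0,\hat\tau_R]$, both before and after $\bar\tau$, so no separate continuation argument is needed. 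Your gluing sketch also works under [B] (since $J(t,c,0,s)=J(t,c,x,0)=U(c)$), but the paper's route is cleaner and avoids the case analysis you flag.
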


When executing a large amount of the security, 
it is important to decrease the execution speed to reduce the execution cost. 
However, Theorem \ref {th_S-shaped} tells us that when the MI function is S-shaped (especially, concave on $[0, \bar {x}_0]$), 
it is undesirable to decrease the execution speed beyond the threshold, $\bar {x}_0$. 
An optimal execution strategy in this case is to sell with the execution speed greater than $\bar {x}_0$ or to stop selling.

We can apply Theorem \ref {th_S-shaped} if we verify the smoothness of the value function $J$. 
Even if we find a classical (sub)solution of (\ref {HJB}), which does not necessarily satisfy the boundary conditions (\ref {boundary_cond1})--(\ref {boundary_cond3}), 
we can construct an optimal strategy to $J(T, c_0, x_0, s_0)$. 
We introduce the following verification theorem.

\begin{theorem}\label{th_verification}
Let $(c_0, x_0, s_0)\in \tilde{D}$ and let $v\in C([0, T]\times D)\cap C^{1,1,1,2}((0, T]\times \tilde{D})$ be a function that satisfies the following conditions.
\begin{itemize}
 \item [{\rm (i)}] There are $K, m > 0$ such that 
\begin{eqnarray*}
|v(t, c, x, s)| \leq K(1 + c^m + x^m + s^m), \ \ 
t\in [0, T], \ (c, x, s)\in D. 
\end{eqnarray*}
 \item [{\rm (ii)}] $v(0, c, x, s)\geq u(c, x, s)$ holds for each $(c, x, s)\in D$. 
 \item [{\rm (iii)}] $\frac{\partial }{\partial t}v - \sup _{y\geq 0}\mathscr {L}^yv \geq 0$ on $(0, T]\times \tilde{D}$. 
 \item [{\rm (iv)}] There is an $(\hat{x}_t)_t\in \mathcal {A}_T(x_0)$ such that 
$\E [u(\hat{C}_T, \hat{X}_T, \hat{S}_T)] \geq v(T, c_0, x_0, s_0)$, where 
$(\hat{C}_t, \hat{X}_t, \hat{S}_t)_t$ is given by $(\ref {notation_SDE})$ with $(\hat{C}_0, \hat{X}_0, \hat{S}_0) = (c_0, x_0, s_0)$. 
\end{itemize}
Then, we have $J(T, c_0, x_0, s_0) = \E [u(\hat{C}_T, \hat{X}_T, \hat{S}_T)] = v(T, c_0, x_0, s_0)$ and $(\hat{x}_t)_t$ is its optimizer. 
\end{theorem}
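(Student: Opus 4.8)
The plan is to run a verification (supermartingale) argument establishing $J\le v$ and then to close the loop with hypotheses (ii) and (iv). Precisely, I will first prove that $J(T,c_0,x_0,s_0)\le v(T,c_0,x_0,s_0)$; granting this, (iv) gives $v(T,c_0,x_0,s_0)\le\E[u(\hat C_T,\hat X_T,\hat S_T)]$, while $(\hat x_t)_t\in\mathcal A_T(x_0)$ and the definition $(\ref{def_J})$ of $J$ give $\E[u(\hat C_T,\hat X_T,\hat S_T)]\le J(T,c_0,x_0,s_0)$, so that chaining the three inequalities forces equality throughout and exhibits $(\hat x_t)_t$ as an optimizer.

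For the inequality $J\le v$, fix an arbitrary admissible strategy $(x_r)_r\in\mathcal A_T(x_0)$ with controlled process $(C_r,X_r,S_r)_r$ started from $(c_0,x_0,s_0)$, and apply It\^o's formula to $\Phi_t:=v(T-t,C_t,X_t,S_t)$. On $(0,T]\times\tilde D$, where $v\in C^{1,1,1,2}$, the dynamics $(\ref{notation_SDE})$ give that $\Phi$ has drift $\bigl(-\tfrac{\partial}{\partial t}v+\mathscr L^{x_t}v\bigr)(T-t,C_t,X_t,S_t)$ and martingale part $\int_0^t\hat\sigma(S_r)\tfrac{\partial}{\partial s}v(T-r,C_r,X_r,S_r)\,dB_r$. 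By (iii), $\tfrac{\partial}{\partial t}v\ge\mathscr L^yv$ for every $y\ge 0$, and taking $y=x_t$ shows that the drift is nonpositive as long as the state remains in $\tilde D$. Introducing the localizing times $\tau_n=\inf\{t\ge 0;\ S_t\notin(1/n,n)\ \text{or}\ X_t\le 1/n\ \text{or}\ C_t\ge n\}\wedge(T-1/n)$, which keep $(T-t,C_t,X_t,S_t)$ in a fixed compact subset of $(0,T]\times\tilde D$ for $t\le\tau_n$, the stopped martingale part is a true martingale, whence $\E[\Phi_{\tau_n}]\le\Phi_0=v(T,c_0,x_0,s_0)$.

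It then remains to let $n\to\infty$ and to bound $\lim_n\E[\Phi_{\tau_n}]$ from below by $\E[u(C_T,X_T,S_T)]$. Since $X$ is nonincreasing and $S$ is absorbed at $0$, one has $\tau_n\uparrow\rho:=\inf\{t;\ S_t=0\ \text{or}\ X_t=0\}\wedge T$, and by continuity of $v$ on $[0,T]\times D$, $\Phi_{\tau_n}\to v(T-\rho,C_\rho,X_\rho,S_\rho)$ a.s.; the interchange of limit and expectation, as well as the martingale claim above, are justified by the growth bound (i) together with the moment estimates $\E[\sup_{t\le T}(C_t^m+X_t^m+S_t^m)]<\infty$ available from \cite{Kato_TJSIAM}. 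On $\{\rho=T\}$, (ii) gives $v(0,C_T,X_T,S_T)\ge u(C_T,X_T,S_T)$. On $\{\rho<T\}$ the controlled process reaches $\partial D$ at time $\rho$ and thereafter has explicit dynamics: if $X_\rho=0$ then $\int_0^T x_r\,dr\le x_0$ forces $x_r=0$ for a.e.\ $r\in[\rho,T]$, so $C$ and $X$ are frozen and $S$ solves the autonomous SDE $(\ref{Z})$ on $[\rho,T]$; if $S_\rho=0$ then $S\equiv 0$ and $C\equiv C_\rho$ while $X$ is nonincreasing. In both cases I will bound $\E[u(C_T,X_T,S_T)\mid\mathcal F_\rho]$ above by $v(T-\rho,C_\rho,X_\rho,S_\rho)$, using the monotonicity of $u\in\mathcal C$, (ii), and the fact that the interior inequality (iii) near each boundary face degenerates (letting the transversal coordinate tend to its boundary value and using $g(0)=0$, $\hat b(0)=\hat\sigma(0)=0$) to a parabolic supersolution inequality for the generator of the surviving dynamics, which is then compared with the appropriate expectation of $u$ via the maximum principle for that reduced equation. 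Summing the two cases, taking expectations, and then the supremum over $(x_r)_r\in\mathcal A_T(x_0)$ gives $J(T,c_0,x_0,s_0)\le v(T,c_0,x_0,s_0)$.

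The main obstacle is exactly this boundary analysis: since (iii) is assumed only on the open interior $\tilde D$, the process $\Phi_t$ is a supermartingale only up to the hitting time $\rho$ of $\partial D$, and one must separately control the a priori merely continuous behaviour of $v$ on the faces $\{X=0\}$ and $\{S=0\}$ against the degenerate post-$\rho$ dynamics there, while arranging the limits so that the uniform integrability coming from (i) and the moment bounds applies simultaneously to the stopped martingale, the terminal term, and the boundary contributions. By contrast, the It\^o computation, the localization, and the final chaining with (ii) and (iv) are routine.
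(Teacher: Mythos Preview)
Your overall structure---establish $J\le v$ by a supermartingale argument, then close with (iv)---is exactly right and matches the paper. The gap is in your boundary analysis, which you correctly flag as the main obstacle but do not actually resolve. You propose to bound $\E[u(C_T,X_T,S_T)\mid\mathcal F_\rho]$ by $v(T-\rho,C_\rho,X_\rho,S_\rho)$ on $\{\rho<T\}$ by ``letting the transversal coordinate tend to its boundary value'' in (iii) and then invoking a maximum principle for the reduced equation on the face. But $v$ is only assumed $C^{1,1,1,2}$ on $(0,T]\times\tilde D$; nothing guarantees that $\mathscr D v$ or $\partial_s^2 v$ extend continuously to $\{x=0\}$ or $\{s=0\}$, so you cannot pass (iii) to the boundary, and without smoothness of $v$ restricted to the face you have no maximum principle there. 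Concretely, on $\{X_\rho=0\}$ you would need $\E[u(C_\rho,0,Z_{T-\rho}(S_\rho))\mid\mathcal F_\rho]\le v(T-\rho,C_\rho,0,S_\rho)$, and the hypotheses simply do not give this.

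The paper sidesteps the boundary entirely by a scaling trick. First it restricts to essentially bounded strategies $(x_t)_t\in\mathcal A^\infty_T(x_0)$, which forces $g(x_t)$ bounded and hence $S_t>0$ for all $t$. Then it replaces $(x_t)_t$ by $x^n_t:=(1-1/n)x_t$, so that $X^n_t=x_0-(1-1/n)\int_0^t x_r\,dr\ge x_0/n>0$. Thus the controlled process $(C^n_t,X^n_t,S^n_t)$ stays in $\tilde D$ for all $t\in[0,T]$, the localization is only in $s$ and $t$ (via $\tau_R=\inf\{t:S^n_t\ge R\}\wedge(T-1/R)_+$), and (iii) applies on the whole trajectory. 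Letting $R\to\infty$ and using (i)--(ii) gives $\E[u(C^n_T,X^n_T,S^n_T)]\le v(T,c_0,x_0,s_0)$; letting $n\to\infty$ (with a stability estimate for the SDE, e.g.\ Krylov's Theorem~2.5.9) recovers $\E[u(C_T,X_T,S_T)]\le v(T,c_0,x_0,s_0)$. Finally, $J=J^\infty$ (Proposition~\ref{prop_compare_value_functions}) upgrades this to all of $\mathcal A_T(x_0)$. This two-parameter approximation is the missing idea in your argument.
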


In the next section, we introduce some examples in which we derive optimal execution strategies by using Theorem \ref {th_verification}. 

\section{Examples}\label{sec_eg}

Similar to Section 5 in \cite {Kato_FS}, 
we introduce some examples where the security price process is given as the Black--Scholes model. 
We assume that $b(\cdot )\equiv \mu $ and $\sigma (\cdot ) \equiv \sigma $ are constants 
and the utility function is set as $u_{\mathrm {RN}}(c, x, s) = c$; that is, the trader is risk-neutral. 

By using the same argument as the proof of Proposition 5.2 in \cite {Kato_FS}, we have the following theorem.

\begin{theorem}\label{th_eg}We have $J(t, c, x, s) = c + sW(t, x)$, where 
\begin{eqnarray*}
W(t, x) &=& \sup _{(x_r)_r\in \mathcal {A}^{\rm stat}_t(x)}\int ^t_0\exp \left( -\tilde{\mu }r - \int ^r_0g(x_v)dv \right) x_rdr, \\
\mathcal {A}^{\rm stat}_t(x) &=& \{ (x_r)_r\in \mathcal {A}_t(x)\ ; \ (x_r)_r \mbox { is deterministic} \}, \\
\tilde{\mu } &=& -\mu - \frac{1}{2}\sigma ^2. 
\end{eqnarray*}
\end{theorem}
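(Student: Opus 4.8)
The plan is to follow the proof of Proposition~5.2 in \cite{Kato_FS}: evaluate the expected terminal cash explicitly and then argue that randomizing the execution rate is useless for a risk-neutral trader in the Black--Scholes setting. Since $u_{\mathrm{RN}}(c,x,s)=c$ and $dC_r=x_rS_rdr$ with $x_r\geq 0$, $S_r\geq 0$, we have $J(t,c,x,s)=c+\sup_{(x_r)_r\in\mathcal{A}_t(x)}\E\big[\int_0^t x_rS_rdr\big]$, which lies a priori in $[c,+\infty]$ and is well defined because the integrand is non-negative (so the non-negativity required of elements of $\mathcal{C}$ plays no role here). By Remark~1(i), while $S_r>0$ the log-price obeys $dY_r=\mu\,dr+\sigma\,dB_r-g(x_r)dr$, whence
\begin{eqnarray*}
S_r=s\,\mathcal{M}_r\exp\Big(-\tilde{\mu}r-\int_0^r g(x_v)dv\Big),\qquad \mathcal{M}_r:=\exp\Big(\sigma B_r-\tfrac12\sigma^2 r\Big),
\end{eqnarray*}
with $\tilde{\mu}=-\mu-\tfrac12\sigma^2$; I will note that this identity persists, both sides vanishing, once $\int_0^\cdot g(x_v)dv$ diverges, in agreement with the absorption convention $S_r\equiv 0$ set up in Section~\ref{sec_supplement}.

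First I would establish ``$\geq$''. For a deterministic $(x_r)_r\in\mathcal{A}^{\rm stat}_t(x)$ the factor $\mathcal{M}_r$ is independent of the (deterministic) remaining integrand and satisfies $\E[\mathcal{M}_r]=1$, so Fubini--Tonelli gives $\E\big[\int_0^t x_rS_rdr\big]=s\int_0^t x_r\exp(-\tilde{\mu}r-\int_0^r g(x_v)dv)dr$; maximizing over $\mathcal{A}^{\rm stat}_t(x)$ yields $J(t,c,x,s)\geq c+sW(t,x)$.

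The substantive step is ``$\leq$''. Here I would use that $(\mathcal{M}_r)_{0\leq r\leq T}$ is a genuine $P$-martingale (immediate since $\sigma$ is constant) together with a conditioning (equivalently, change-of-measure) argument: for any $(x_r)_r\in\mathcal{A}_t(x)$ and each $r\leq t$, the quantity $Z_r:=x_r\exp(-\tilde{\mu}r-\int_0^r g(x_v)dv)$ is non-negative and $\mathcal{F}_r$-measurable, so $\mathcal{M}_r=\E[\mathcal{M}_T\mid\mathcal{F}_r]$ and the tower property give $\E[\mathcal{M}_r\,Z_r]=\E[\mathcal{M}_T\,Z_r]$. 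Writing $x_rS_r/s=\mathcal{M}_rZ_r$ and applying Fubini--Tonelli to interchange $\E$ and $\int_0^t(\cdot)\,dr$ twice,
\begin{eqnarray*}
\E\Big[\int_0^t x_rS_rdr\Big]=s\,\E\Big[\mathcal{M}_T\int_0^t x_r\exp\Big(-\tilde{\mu}r-\int_0^r g(x_v)dv\Big)dr\Big].
\end{eqnarray*}
For $P$-a.e.\ $\omega$ the path $r\mapsto x_r(\omega)$ is a non-negative measurable function on $[0,t]$ with $\int_0^t x_r(\omega)\,dr\leq x$, i.e.\ a deterministic element of $\mathcal{A}_t(x)$, so $\int_0^t x_r\exp(-\tilde{\mu}r-\int_0^r g(x_v)dv)dr\leq W(t,x)$ almost surely, with $W(t,x)\leq xe^{|\tilde{\mu}|t}<\infty$ because $g\geq 0$. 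Hence $\E\big[\int_0^t x_rS_rdr\big]\leq sW(t,x)\E[\mathcal{M}_T]=sW(t,x)$, and taking the supremum gives $J(t,c,x,s)\leq c+sW(t,x)$. Combining the two bounds proves $J(t,c,x,s)=c+sW(t,x)$.

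I expect the difficulty to be technical rather than conceptual: justifying the conditioning/change-of-measure step and the Fubini--Tonelli interchanges (non-negativity of every integrand removes integrability worries), verifying the measurability of the path sections $r\mapsto x_r(\omega)$ that underlies the pathwise bound by $W(t,x)$, and making sure the closed form for $S_r$ is compatible with the absorption of $S$ at $0$ --- the last point being precisely what Section~\ref{sec_supplement} addresses. The one idea that actually matters is that, after the exponential martingale $\mathcal{M}$ has been absorbed by the conditioning, the objective reduces to a pathwise functional of the execution rate whose supremum is already attained among deterministic strategies, so a risk-neutral trader in the Black--Scholes model gains nothing from randomizing the liquidation speed.
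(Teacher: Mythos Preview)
Your proposal is correct and follows essentially the same route the paper indicates: the paper does not spell out a proof but refers to Proposition~5.2 in \cite{Kato_FS}, and your martingale/conditioning argument (write $S_r=s\,\mathcal{M}_r\exp(-\tilde{\mu}r-\int_0^r g(x_v)\,dv)$, use $\E[\mathcal{M}_T\mid\mathcal{F}_r]=\mathcal{M}_r$ to pass from $\mathcal{M}_r$ to $\mathcal{M}_T$, then bound the remaining pathwise functional by $W(t,x)$) is precisely that argument. The technical caveats you list---Fubini--Tonelli via non-negativity, Borel measurability of the sections $r\mapsto x_r(\omega)$, and compatibility of the closed form for $S_r$ with the absorption convention of Section~\ref{sec_supplement}---are exactly the points one has to check, and they go through as you describe.
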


Theorems \ref {th_viscosity} and \ref {th_eg} lead us to 
\begin{theorem}\label{th_eg_HJB}
$W$ is a viscosity solution to the partial differential equation
\begin{eqnarray}\label{HJB_W}
\frac{\partial }{\partial t}W + \tilde{\mu }W + \inf _{y\geq 0}\left\{ Wg(y) - \left( 1 - \frac{\partial }{\partial x}W\right) y \right\} = 0
\end{eqnarray}
with the boundary condition 
\begin{eqnarray}\label{boundary_W}
W(t, 0) = W(0, x) = 0. 
\end{eqnarray}
Moreover, if $\liminf _{x\rightarrow \infty }h(x)/x > 0$, a viscosity solution to $(\ref {HJB_W})$--$(\ref {boundary_W})$ is unique in the following sense. 
If a continuous function, $w$, with polynomial growth is a viscosity solution to $(\ref {HJB_W})$--$(\ref {boundary_W})$, then $W = w$. 
\end{theorem}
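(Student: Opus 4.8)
The plan is to transport the viscosity property of $J$ established in Theorem \ref{th_viscosity} through the explicit decomposition $J(t,c,x,s)=c+sW(t,x)$ of Theorem \ref{th_eg}. The algebraic backbone is the observation that, under the Black--Scholes specification $\hat{b}(s)=-\tilde{\mu}s$ and $\hat{\sigma}(s)=\sigma s$, a product-form function $\phi(t,c,x,s)=c+s\psi(t,x)$ satisfies $\frac{\partial}{\partial c}\phi=1$, $\frac{\partial}{\partial s}\phi=\psi$, $\frac{\partial^2}{\partial s^2}\phi=0$ and $\frac{\partial}{\partial x}\phi=s\frac{\partial}{\partial x}\psi$, so the diffusion term drops out and
\begin{eqnarray*}
\frac{\partial}{\partial t}\phi-\sup_{y\geq 0}\mathscr{L}^y\phi
&=& s\left(\frac{\partial}{\partial t}\psi+\tilde{\mu}\psi+\inf_{y\geq 0}\left\{g(y)\psi-\left(1-\frac{\partial}{\partial x}\psi\right)y\right\}\right).
\end{eqnarray*}
Thus on $\{s>0\}$, equation (\ref{HJB}) restricted to such $\phi$ coincides, up to the positive factor $s$, with equation (\ref{HJB_W}).

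For the first assertion I would first verify the hypotheses of Theorem \ref{th_viscosity}(i) for $u=u_{\mathrm{RN}}$: continuity and polynomial growth of $W$ are inherited from those of $J$, while (\ref{ass_viscosity}) for $J=c+sW$ reduces to $W(t,x)>0$ on $(0,T]\times(0,\infty)$, which is clear since inserting a small constant strategy $x_r\equiv\varepsilon$ into the formula of Theorem \ref{th_eg} already produces a strictly positive value. Granting that $J$ is a viscosity solution of (\ref{HJB}), I would argue: if $\psi$ touches $W$ from above (resp.\ below) at $(t_0,x_0)\in(0,T]\times(0,\infty)$, then $\phi(t,c,x,s):=c+s\psi(t,x)$ touches $J$ from above (resp.\ below) at $(t_0,c_0,x_0,s_0)$ for every $c_0\in\mathbb{R}$ and $s_0>0$, because $J-\phi=s(W-\psi)$ and $s_0>0$; dividing the viscosity inequality for $J$ at that point by $s_0$ and using the identity above yields the viscosity inequality for $W$ at $(t_0,x_0)$. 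The boundary conditions are immediate: $W(0,x)=0$ because the defining integral runs over an empty interval, and $W(t,0)=0$ because $\mathcal{A}_t(0)=\{0\}$.

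For uniqueness I would take a continuous viscosity solution $w$ of (\ref{HJB_W})--(\ref{boundary_W}) with polynomial growth, set $v(t,c,x,s):=c+sw(t,x)$ (continuous on $[0,T]\times D$, polynomial growth), and show that $v$ is a viscosity solution of (\ref{HJB}) on $(0,T]\times\tilde{D}$. If $\phi$ touches $v$ from above at $\bar{z}=(t_0,c_0,x_0,s_0)\in(0,T]\times\tilde{D}$, then since $v$ is affine separately in $c$ and in $s$ and $\bar{z}$ is interior, one is forced to have $\frac{\partial}{\partial c}\phi(\bar{z})=1$, $\frac{\partial}{\partial s}\phi(\bar{z})=w(t_0,x_0)$ and $\frac{\partial^2}{\partial s^2}\phi(\bar{z})\geq 0$; moreover $\psi(t,x):=s_0^{-1}(\phi(t,c_0,x,s_0)-c_0)$ touches $w$ from above at $(t_0,x_0)$, with $\psi(t_0,x_0)=w(t_0,x_0)$, $\frac{\partial}{\partial t}\psi(t_0,x_0)=s_0^{-1}\frac{\partial}{\partial t}\phi(\bar{z})$ and $\frac{\partial}{\partial x}\psi(t_0,x_0)=s_0^{-1}\frac{\partial}{\partial x}\phi(\bar{z})$. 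Substituting these relations into $\frac{\partial}{\partial t}\phi(\bar{z})-\sup_{y\geq 0}\mathscr{L}^y\phi(\bar{z})$ and using $\hat{b}(s)=-\tilde{\mu}s$ gives $s_0$ times the left side of (\ref{HJB_W}) evaluated with $\psi$ at $(t_0,x_0)$, minus $\frac{1}{2}\hat{\sigma}(s_0)^2\frac{\partial^2}{\partial s^2}\phi(\bar{z})$; both terms are $\leq 0$ (the first by the subsolution property of $w$, the second because $\frac{\partial^2}{\partial s^2}\phi(\bar{z})\geq 0$), so $v$ is a viscosity subsolution, and the supersolution case is symmetric. Finally $v$ satisfies (\ref{boundary_cond1})--(\ref{boundary_cond3}) for $u_{\mathrm{RN}}$ (all three right sides equal $c$, and (\ref{boundary_W}) makes $v(0,c,x,s)=v(t,c,0,s)=v(t,c,x,0)=c$), and $\hat{b},\hat{\sigma}$ are linear hence Lipschitz, so Theorem \ref{th_viscosity}(ii) (whose remaining hypothesis $\liminf_{x\to\infty}h(x)/x>0$ is assumed) gives $J=v$ on $[0,T]\times D$; evaluating at $s=1$ yields $W=w$.

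I expect the main obstacle to be the backward transfer in the uniqueness argument: a general test function $\phi$ for $v$ need not be of product form, so the structure of $v$ as an affine function of $c$ and $s$ must be exploited to extract $\frac{\partial}{\partial c}\phi(\bar{z})$, $\frac{\partial}{\partial s}\phi(\bar{z})$ and, crucially, the sign of $\frac{\partial^2}{\partial s^2}\phi(\bar{z})$; the point is that this sign always turns out favourable, which is exactly the degenerate ellipticity of (\ref{HJB}) working in our favour. A minor technical remark: Theorem \ref{th_viscosity} is stated for $u\in\mathcal{C}$, whereas $u_{\mathrm{RN}}(c,x,s)=c$ is not non-negative, but its proof uses only the dynamic programming principle, continuity, polynomial growth and (\ref{ass_viscosity}) for $J(\cdot\,;u_{\mathrm{RN}})$, all available here (cf.\ Theorem \ref{th_eg} and the proof of Proposition 5.2 in \cite{Kato_FS}), so it carries over unchanged.
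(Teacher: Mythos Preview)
Your proposal is correct and follows exactly the route the paper indicates: the paper does not give a separate proof of Theorem \ref{th_eg_HJB} but simply states that ``Theorems \ref{th_viscosity} and \ref{th_eg} lead us to'' it, and your transfer argument through the decomposition $J(t,c,x,s)=c+sW(t,x)$ is precisely the intended mechanism. Your handling of the non-product-form test functions in the uniqueness direction (extracting $\phi_c=1$, $\phi_s=w$, and the favourable sign of $\phi_{ss}$ from the affine structure of $v$ in $c$ and $s$) is the key step left implicit in the paper, and you have carried it out correctly.
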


Until the end of this section, we assume $\tilde{\mu } > 0$ to focus on the case where the expected security price decreases over time.

\subsection{Mixed Power MI Function}

Here, we consider the case where $g$ is given by 
\begin{eqnarray}\label{mixed_power_MI}
g(x) = \beta x^{\tilde{\pi }} \ (0\leq x\leq \bar {x}_0), \ \ \alpha x^\pi + \gamma  \ (x > \bar {x}_0)
\end{eqnarray}
for some $\bar {x}_0 \geq 0$, $\alpha > 0$ and $0 < \tilde{\pi } < 1 < \pi $. 
Because $g$ is continuously differentiable, $\beta $ and $\gamma $ must satisfy 
\begin{eqnarray*}
\beta = \frac{\pi }{\tilde{\pi }}\alpha \bar {x}_0^{\pi - \tilde{\pi }}, \ \ 
\gamma = \left( \frac{\pi }{\tilde{\pi }} - 1\right) \alpha \bar {x}_0^\pi . 
\end{eqnarray*}
Figure \ref {fig_mixed_power} shows the form of $g$ when $\tilde{\pi } = 0.5$ and $\pi = 2$. 

\begin{figure}[!h]
\centerline{\includegraphics[height = 8cm,width=12cm]{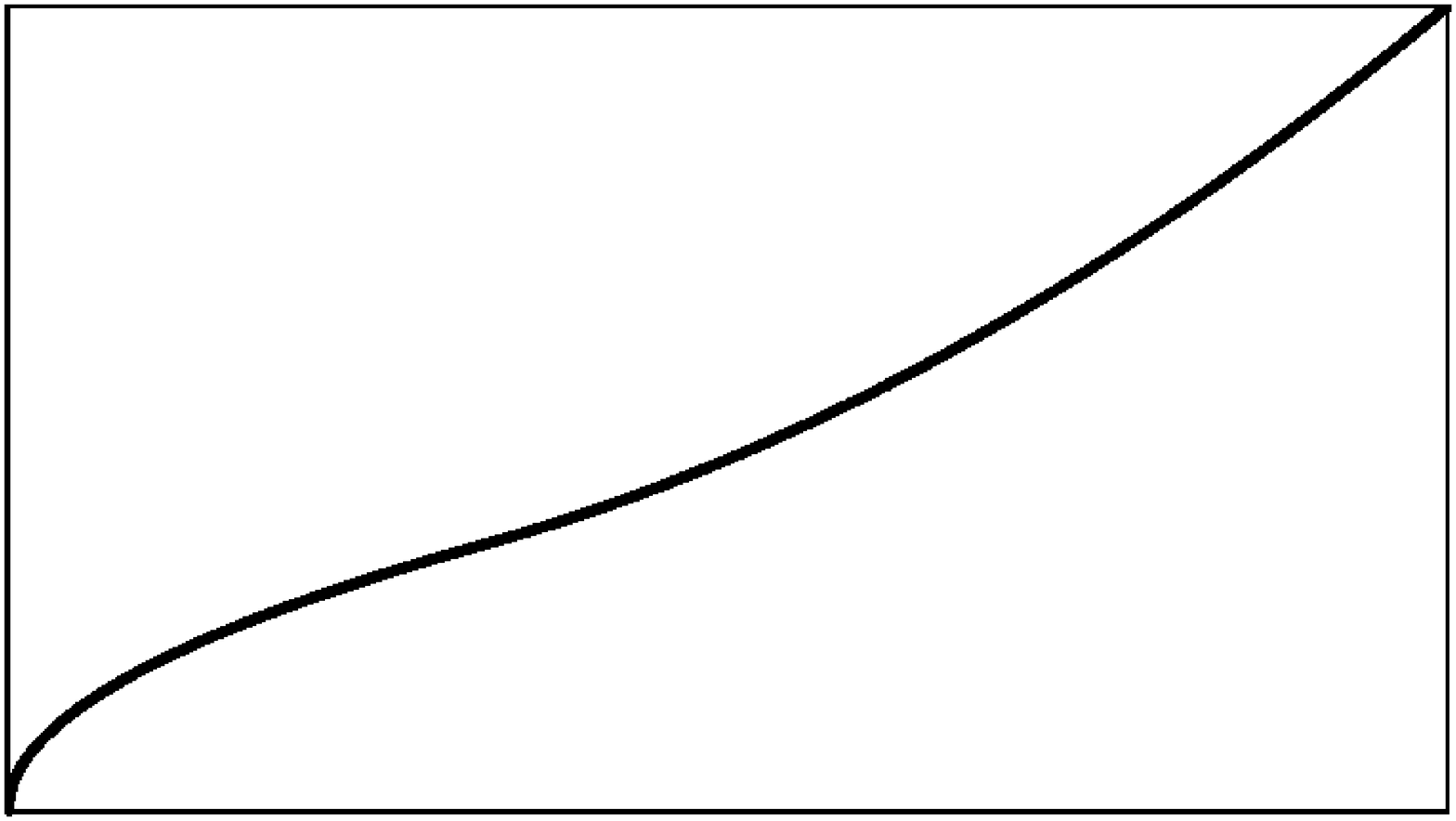}}
\caption{Form of the MI function $g(x)$ defined as (\ref {mixed_power_MI}) with $\tilde{\pi } = 0.5$ and $\pi = 2$. 
The horizontal axis corresponds to $x$. The vertical axis corresponds to $g(x)$. }
\label{fig_mixed_power}
\end{figure}

The next result is a pure extension of Theorem 5.4 in \cite {Kato_FS}. 

\begin{theorem}\label{th_mixed_power}
Set 
\begin{eqnarray*}
x^{*, 1} &=& \frac{1}{\delta _\pi }\mathbb {B}\left ( 1 - \exp \left( - \frac{\pi }{\pi -1}(\tilde{\mu } + \gamma )T \right) ; \frac{1}{\pi } + 1, 2\right ), \\
x^{*, 2} &=& \nu _\pi T, 
\end{eqnarray*}
where 
\begin{eqnarray*}
\delta _\pi = \alpha ^{1/\pi }\pi \left( \frac{\tilde{\mu } + \gamma }{\pi - 1} \right) ^{\frac{\pi - 1}{\pi }}, \ \ 
\nu _\pi = \left( \frac{\tilde{\mu } + \gamma }{(\pi - 1)\alpha } \right) ^{1/\pi }
\end{eqnarray*}
and 
\begin{eqnarray*}
\mathbb {B}(z ; a, b) = \int ^z_0\frac{dx}{x^{a-1}(1-x)^{b-1}}
\end{eqnarray*}
is the incomplete Beta function. 
\begin{itemize}
 \item [{\rm (i)}] If $x_0 \geq x^{*, 1}$, we have 
\begin{eqnarray*}
J(T, c_0, x_0, s_0) = c_0 + 
\frac{s_0}{\delta _\pi }\left( 1 - \exp \left( -\frac{\pi }{\pi - 1}(\tilde{\mu } + \gamma )T \right)  \right) ^{\frac{\pi - 1}{\pi }},
\end{eqnarray*}
and its optimizer is given by 
\begin{eqnarray*}
\hat{x}_t = \nu _\pi 
\left( 1 - \exp \left( -\frac{\pi }{\pi - 1}(\tilde{\mu } + \gamma )(T - t) \right)  \right) ^{-1/\pi }. 
\end{eqnarray*}
 \item [{\rm (ii)}] If $x_0 \leq x^{*, 2}$, we have 
\begin{eqnarray*}
J(T, c_0, x_0, s_0) = c_0 + s_0\cdot \frac{1 - e^{-\delta _\pi x_0}}{\delta _\pi }, 
\end{eqnarray*}
and its optimizer is given by 
\begin{eqnarray*}
\hat{x}_t = \nu _\pi 1_{[0, x_0/\nu _\pi ]}(t). 
\end{eqnarray*}
\end{itemize}
\end{theorem}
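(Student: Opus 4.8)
The plan is to apply the verification theorem, Theorem \ref{th_verification}, separately in cases (i) and (ii) with an explicitly guessed candidate. By Theorem \ref{th_eg} it suffices to look for a candidate of the product form $v(t,c,x,s)=c+s\,\overline{W}(t,x)$; since $b\equiv\mu$ and $\sigma\equiv\sigma$ give $\hat b(s)=-\tilde{\mu}s$ and $\partial_s^2 v\equiv 0$, a short computation shows that for such $v$ the inequality in Theorem \ref{th_verification}(iii) is equivalent (after dividing by $s>0$) to $\overline{W}$ being a classical supersolution of the scalar equation (\ref{HJB_W}), while condition (ii) reduces to $\overline{W}(0,\cdot)\ge 0$; the required regularity and polynomial growth will be immediate for both candidates below. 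Thus it suffices, in each case, to produce such a $\overline{W}$ together with the claimed strategy $(\hat x_t)_t\in\mathcal{A}_T(x_0)$ realizing $\E[u_{\mathrm{RN}}(\hat C_T,\hat X_T,\hat S_T)]=c_0+s_0\overline{W}(T,x_0)$ equal to the stated value.

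For part (i) I would take $\overline{W}(t,x)=\delta_\pi^{-1}(1-e^{-\kappa t})^{(\pi-1)/\pi}$ with $\kappa:=\frac{\pi}{\pi-1}(\tilde{\mu}+\gamma)$, which is independent of $x$, so $\partial_x\overline{W}\equiv 0$ and the bracket in (\ref{HJB_W}) is $\inf_{y\ge 0}\{g(y)\overline{W}-y\}$. On the convex branch $y>\bar{x}_0$ the first-order condition $g'(y)\overline{W}=1$ has the solution $y^{*}(t)=(\alpha\pi\overline{W}(t))^{-1/(\pi-1)}=\nu_\pi(1-e^{-\kappa t})^{-1/\pi}$, and one checks $y^{*}(t)>\nu_\pi>\bar{x}_0$ on $(0,T]$ — the inequality $\nu_\pi>\bar{x}_0$ being a consequence of $\tilde{\pi}<1$, $\tilde{\mu}>0$ and the compatibility relation for $\gamma$ — so this minimizer lies in the admissible range; the concave branch $(0,\bar{x}_0]$ and the points $y=0,\bar{x}_0$ yield no smaller value because $g(y)\overline{W}-y$ is increasing near $0$ (as $h(0+)=\infty$) and the convex branch attains its minimum at $y^{*}$. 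Substituting $y^{*}$ turns (\ref{HJB_W}) into the Bernoulli ODE $\partial_t\overline{W}+(\tilde{\mu}+\gamma)\overline{W}=\frac{\pi-1}{\pi}(\alpha\pi\overline{W})^{-1/(\pi-1)}$, which the chosen $\overline{W}$ satisfies, and $\overline{W}(0,\cdot)=0$. The candidate optimizer is the feedback control $\hat x_t=y^{*}(T-t)\in(\bar{x}_0,\infty)$; the substitutions $u=T-t$ and $z=1-e^{-\kappa u}$ give $\int_0^T\hat x_t\,dt=\frac{\nu_\pi}{\kappa}\mathbb{B}\!\left(1-e^{-\kappa T};\frac{1}{\pi}+1,2\right)=x^{*,1}$, using $\delta_\pi\nu_\pi=\kappa$, so $(\hat x_t)_t\in\mathcal{A}_T(x_0)$ precisely when $x_0\ge x^{*,1}$; and the standard verification identity along the trajectory (using again $\alpha(y^{*})^\pi\overline{W}=y^{*}/\pi$) gives $\E[\hat C_T]=c_0+s_0\overline{W}(T,x_0)$, the stated value.

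For part (ii) I would take $\overline{W}(t,x)=\delta_\pi^{-1}(1-e^{-\delta_\pi x})$, independent of $t$; then $1-\partial_x\overline{W}=\delta_\pi\overline{W}$, so (\ref{HJB_W}) collapses to $\overline{W}\bigl(\tilde{\mu}+\inf_{y\ge 0}\{g(y)-\delta_\pi y\}\bigr)=0$. The infimum is attained on the convex branch at $y=\nu_\pi$ (from $\alpha\pi\nu_\pi^{\pi-1}=\delta_\pi$), where the formulas for $\delta_\pi,\nu_\pi$ give $g(\nu_\pi)-\delta_\pi\nu_\pi=-\tilde{\mu}$; again the concave branch gives nothing smaller since $\nu_\pi>\bar{x}_0$ and $\tilde{\mu}>0$, so $\overline{W}$ is in fact an exact classical solution of (\ref{HJB_W}). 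Here $\overline{W}(0,x)=\delta_\pi^{-1}(1-e^{-\delta_\pi x})\ge 0$, so condition (ii) holds only as an inequality — this candidate deliberately does not meet the boundary value in (\ref{boundary_W}), which is exactly why the $t$-independent guess is legitimate. The candidate optimizer $\hat x_t=\nu_\pi 1_{[0,x_0/\nu_\pi]}(t)$ takes values in $\{0\}\cup(\bar{x}_0,\infty)$, and $x_0\le\nu_\pi T=x^{*,2}$ ensures $x_0/\nu_\pi\le T$, so it is admissible with $\int_0^T\hat x_t\,dt=x_0$; using $\tilde{\mu}+g(\nu_\pi)=\kappa$ and $\delta_\pi\nu_\pi=\kappa$ one computes $\E[\hat C_T]=c_0+s_0\nu_\pi\int_0^{x_0/\nu_\pi}e^{-\kappa r}\,dr=c_0+\delta_\pi^{-1}s_0(1-e^{-\delta_\pi x_0})=v(T,c_0,x_0,s_0)$. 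In both cases Theorem \ref{th_verification} then yields the stated value together with the optimizer.

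The step I expect to be the main obstacle is the analysis of $\inf_{y\ge 0}\{g(y)\overline{W}-(1-\partial_x\overline{W})y\}$ over the S-shaped $g$: one must rule out that the concave branch $(0,\bar{x}_0]$ gives a strictly smaller value than the convex-branch critical point, since otherwise $\overline{W}$ would fail to be a supersolution and condition (iii) of Theorem \ref{th_verification} would break. This comes down to the inequality $\nu_\pi>\bar{x}_0$ (a consequence of $\tilde{\pi}<1$ together with the compatibility relations for $\beta,\gamma$) and the correct sign of the minimum value. Once this is settled, the remainder — the Bernoulli ODE in (i), the incomplete-Beta identity for $\int_0^T\hat x_t\,dt$, and the elementary integral in (ii) — is routine: it is essentially the computation behind Theorem 5.4 in \cite{Kato_FS} with $\tilde{\mu}$ replaced by $\tilde{\mu}+\gamma$ on the relevant range $y>\bar{x}_0$, which is precisely the sense in which the result is a pure extension of that theorem.
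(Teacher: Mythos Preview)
Your proposal is correct and follows exactly the approach the paper indicates: the paper's own proof of Theorem~\ref{th_mixed_power} consists of the single sentence ``obtained by a straightforward calculation using Theorem~\ref{th_verification},'' and you have supplied precisely that calculation, with the right candidate functions $v=c+s\overline{W}$ in each case and the correct identification of the S-shaped-branch comparison (i.e., ruling out minimizers in $(0,\bar{x}_0]$ via $\nu_\pi>\bar{x}_0$) as the only genuinely new ingredient beyond Theorem~5.4 of \cite{Kato_FS}.
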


Similarly to Theorem 5.4 in \cite {Kato_FS}, 
the form of the optimal strategy changes drastically according to the initial shares $x_0$, 
and we do not have an analytical solution when $x^{*, 2} < x_0 < x^{*, 1}$. 
Moreover, when $x_0 \leq x^{*, 2}$, the optimal strategy is the TWAP strategy, that is to sell with constant speed $\nu _\pi $. 
The TWAP strategy is the optimal strategy for the Almgren--Chriss model, 
which is a standard model of optimal execution, for the risk-neutral trader 
\cite {Almgren-Chriss, Gatheral-Schied_AC1, Kato_JSIAM_VWAP, Kato_AC_Preprint}. 
Theorem \ref {th_mixed_power}(ii) is also obtained as a corollary of the result of the next subsection. 
In addition,
\begin{eqnarray*}
\nu _\pi > \left( \frac{\gamma }{(\pi - 1)\alpha } \right) ^{1/\pi } = 
\left( \frac{\pi - \tilde {\pi }}{\tilde {\pi }(\pi - 1)} \right) ^{1/\pi }\bar{x}_0 > \bar{x}_0; 
\end{eqnarray*}
hence we can verify that $\hat{x}_t\in \{ 0 \} \cup (\bar{x}_0, \infty )$ in both cases of Theorem \ref {th_mixed_power}(i)(ii). 
This is consistent with Theorem \ref {th_S-shaped}.

\subsection{TWAP Strategies for Small Amount Execution}

Next, we consider the case where the amount $x_0$ of initial shares is small. 
Here, we do not restrict the form of $g$ without [A1]--[A4]. 

Before stating the result, we prepare the following proposition. 

\begin{proposition}\label{prop_nu}
Set $G_h(x) = xh(x) - g(x)$. 
Then, there is a unique $\nu _h \in (\bar{x}_0, \infty )$ such that $G_h(\nu _h) = \tilde{\mu }$. 
\end{proposition}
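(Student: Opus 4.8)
The plan is to analyze the function $G_h(x) = xh(x) - g(x)$ on the interval $(\bar{x}_0, \infty)$ and show it is a continuous strictly increasing bijection onto $(\text{something} \leq 0, \infty)$, so that the equation $G_h(x) = \tilde{\mu}$ has a unique root there (recall $\tilde{\mu} > 0$ is assumed in force in this section, though even without it uniqueness in $(\bar x_0,\infty)$ will follow). First I would compute the derivative: since $g \in C^1((0,\infty))$ and $h = g'$ is itself $C^1$ away from $\bar x_0$ by [A3] — or at least, since $h$ is strictly increasing on $[\bar x_0,\infty)$, $G_h$ is differentiable a.e. with $G_h'(x) = h(x) + x h'(x) - g'(x) = x h'(x) \geq 0$, and strictly positive on $(\bar x_0,\infty)$ because $h$ is strictly increasing there (so $h' > 0$ in the a.e. sense, and more carefully $G_h$ is strictly increasing because $x\mapsto xh(x)$ is strictly increasing faster than $g$). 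Thus $G_h$ is strictly increasing and continuous on $[\bar x_0,\infty)$.

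Next I would pin down the two endpoint limits. At the right end, [A4] gives $h(\infty) = \infty$; I need $G_h(x)\to\infty$. Write $G_h(x) = xh(x) - g(x) = \int_{\bar x_0}^x (h(x) - h(v))\,dv + \bar x_0 h(x) - g(\bar x_0) + (\text{boundary terms})$; more simply, $g(x) = g(\bar x_0) + \int_{\bar x_0}^x h(v)\,dv \leq g(\bar x_0) + (x-\bar x_0)h(x)$ since $h$ is increasing, hence $G_h(x) = xh(x) - g(x) \geq \bar x_0 h(x) - g(\bar x_0) \to \infty$ as $x\to\infty$ by [A4]. At the left end $x = \bar x_0$: if $\bar x_0 > 0$ then $G_h(\bar x_0) = \bar x_0 h(\bar x_0) - g(\bar x_0)$, and since $g$ is concave on $[0,\bar x_0]$ with $g(0) = 0$ we get $g(\bar x_0) \geq \bar x_0 g'(\bar x_0^-) \cdot$ — actually by concavity $g(\bar x_0) \geq \bar x_0 h(\bar x_0)$ (the chord lies below... let me restate: concavity of $g$ on $[0,\bar x_0]$ with $g(0)=0$ gives $g(\bar x_0)/\bar x_0 \geq h(\bar x_0) = g'(\bar x_0)$, since the secant slope dominates the right-endpoint derivative), so $G_h(\bar x_0) \leq 0 \leq \tilde\mu$; if $\bar x_0 = 0$ then by [A2] $G_h(0^+) = \lim_{x\to 0} xh(x) - g(x) = 0 - 0 = 0 \leq \tilde\mu$. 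Either way $G_h$ at the left endpoint of $(\bar x_0,\infty)$ is $\leq \tilde\mu$, in fact $\le 0<\tilde\mu$, so by the intermediate value theorem combined with strict monotonicity there is exactly one $\nu_h \in (\bar x_0, \infty)$ with $G_h(\nu_h) = \tilde\mu$.

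The main obstacle I anticipate is the regularity bookkeeping around the point $\bar x_0$ and the justification that $G_h$ is genuinely strictly increasing on $(\bar x_0,\infty)$ rather than merely nondecreasing: one cannot simply differentiate $h$ since [A3] only asserts strict monotonicity, not differentiability, of $h$. The clean fix is to avoid derivatives of $h$ entirely and argue directly: for $\bar x_0 \leq x_1 < x_2$, $G_h(x_2) - G_h(x_1) = x_2 h(x_2) - x_1 h(x_1) - \int_{x_1}^{x_2} h(v)\,dv = \int_{x_1}^{x_2}\big(h(x_2) - h(v)\big)dv + x_1\big(h(x_2) - h(x_1)\big) > 0$, where the integral is $\geq 0$ (h increasing) and the last term is $> 0$ strictly when $x_1 > 0$ because $h(x_2) > h(x_1)$ by [A3]; when $x_1 = \bar x_0 = 0$ the integral term alone is strictly positive. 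This handles strict monotonicity on $(\bar x_0,\infty)$ cleanly, and continuity of $G_h$ is immediate from $g \in C([0,\infty))$ and continuity of $h$ on $(0,\infty)$ (with the $xh(x)\to 0$ control from [A2] at the origin if needed). Assembling monotonicity, continuity, the left-endpoint bound $\le 0$, and the right-endpoint blow-up then yields the claim.
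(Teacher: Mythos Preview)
Your proof is correct in spirit and follows essentially the same route as the paper's: show $G_h(\bar x_0)\le 0$, show $G_h$ is continuous and strictly increasing on $(\bar x_0,\infty)$, show $G_h(x)\to\infty$, and conclude by the intermediate value theorem. Your direct difference computation
\[
G_h(x_2)-G_h(x_1)=\int_{x_1}^{x_2}\bigl(h(x_2)-h(v)\bigr)\,dv + x_1\bigl(h(x_2)-h(x_1)\bigr)
\]
is exactly the identity underlying the paper's one-line estimate $G_h(x)-G_h(y)\ge (h(x)-h(y))y$.

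There is one small slip worth flagging. Your argument for $G_h(x)\to\infty$ uses the bound $G_h(x)\ge \bar x_0\,h(x)-g(\bar x_0)$, which is vacuous when $\bar x_0=0$ (it only gives $G_h(x)\ge 0$). The fix is already contained in your own monotonicity identity: for any fixed $x_1>\bar x_0$ (hence $x_1>0$) you have $G_h(x_2)\ge G_h(x_1)+x_1\bigl(h(x_2)-h(x_1)\bigr)\to\infty$ as $x_2\to\infty$ by [A4]. This is precisely how the paper handles the right-endpoint limit, by letting $x\to\infty$ in the increment inequality with a fixed base point $y>\bar x_0$.
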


\begin{theorem}\label{th_TWAP}If $x_0 \leq \nu _hT$, 
we have 
\begin{eqnarray}\label{J_TWAP}
J(T, c_0, x_0, s_0) = c_0 + s_0\cdot \frac{1 - e^{-h(\nu _h)x_0}}{h(\nu _h)}, 
\end{eqnarray}
and its optimizer is given by 
\begin{eqnarray}\label{TWAP}
\hat{x}_t = \nu _h1_{[0, x_0/\nu _h]}(t). 
\end{eqnarray}
\end{theorem}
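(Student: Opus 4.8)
The plan is to apply the verification theorem, Theorem~\ref{th_verification}, with the explicit candidate value $v(t,c,x,s) = c + s\Phi(x)$, where $\Phi(x) := (1 - e^{-h(\nu_h)x})/h(\nu_h)$ and $\nu_h$ is the constant furnished by Proposition~\ref{prop_nu}, together with the candidate optimizer $\hat{x}_t = \nu_h 1_{[0,x_0/\nu_h]}(t)$; here $\hat{x}$ is a deterministic element of $\mathcal{A}_T(x_0)$ that liquidates all $x_0$ shares within $[0,T]$ precisely because $x_0 \le \nu_h T$.

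The fact driving the whole argument is the pointwise inequality
\begin{equation*}
g(y) \ \ge\ h(\nu_h)y - \tilde{\mu}, \qquad y \ge 0,
\end{equation*}
with equality iff $y=\nu_h$. I would obtain it by analysing $\varphi(y) := g(y)-h(\nu_h)y$, whose derivative is $h(y)-h(\nu_h)$: by [A3]--[A4] the only candidates for the global minimum of $\varphi$ on $[0,\infty)$ are $y=0$ and $y=\nu_h$, with values $\varphi(0)=0$ and $\varphi(\nu_h) = g(\nu_h)-\nu_h h(\nu_h) = -G_h(\nu_h) = -\tilde{\mu}$ by Proposition~\ref{prop_nu}; since $\tilde{\mu}>0$, the minimum is attained at $\nu_h$.

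Next I would check the four hypotheses of Theorem~\ref{th_verification} for this $v$. Hypotheses (i) and (ii) are immediate, since $0\le\Phi\le 1/h(\nu_h)$ and $v(0,c,x,s) = c+s\Phi(x)\ge c = u_{\mathrm{RN}}(c,x,s)$. For (iii), substituting $v=c+s\Phi(x)$ into $\frac{\partial}{\partial t}v-\sup_{y\ge 0}\mathscr{L}^y v$ and using $\hat{b}(s)=-\tilde{\mu}s$, $\partial_t v\equiv 0$, $\partial^2 v/\partial s^2\equiv 0$ and the identity $1-\Phi'(x)=h(\nu_h)\Phi(x)$, the expression reduces to $s\Phi(x)\bigl(\tilde{\mu}+\inf_{y\ge 0}(g(y)-h(\nu_h)y)\bigr) = s\Phi(x)(\tilde{\mu}-\tilde{\mu}) = 0$, so in fact $v$ solves the HJB equation classically on $(0,T]\times\tilde{D}$. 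For (iv), in the Black--Scholes setting $d\hat{S}_r = \hat{S}_r\bigl((-\tilde{\mu}-g(\hat{x}_r))\,dr+\sigma\,dB_r\bigr)$, hence $\E[\hat{S}_r] = s_0\exp(-\tilde{\mu}r-\int_0^r g(\hat{x}_v)\,dv)$, which on $[0,x_0/\nu_h]$ equals $s_0 e^{-h(\nu_h)\nu_h r}$ again by $G_h(\nu_h)=\tilde{\mu}$; integrating, $\E[\hat{C}_T] = c_0+\int_0^T \hat{x}_r\E[\hat{S}_r]\,dr = c_0+s_0\Phi(x_0) = v(T,c_0,x_0,s_0)$. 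Theorem~\ref{th_verification} then gives exactly $(\ref{J_TWAP})$ with optimizer $(\ref{TWAP})$. (Equivalently, one may bypass Theorem~\ref{th_verification} and argue directly from $J=c+sW$ of Theorem~\ref{th_eg}: for any deterministic $(x_r)_r$ with $N_r:=\int_0^r x_v\,dv\le x_0$, the displayed bound gives $\exp(-\tilde{\mu}r-\int_0^r g(x_v)\,dv)\le e^{-h(\nu_h)N_r}$, so the objective is at most $\int_0^T e^{-h(\nu_h)N_r}\dot{N}_r\,dr = \Phi(N_T)\le\Phi(x_0)$, with $\hat{x}$ attaining equality.)

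The one step that is not routine bookkeeping is the claim that $\nu_h$ is the \emph{global} minimizer of $g(y)-h(\nu_h)y$ on $[0,\infty)$ --- equivalently, that $v$ genuinely solves, rather than merely supersolves, the HJB equation. This is exactly where the S-shape conditions [A3]--[A4] and the standing assumption $\tilde{\mu}>0$ are essential: without $\tilde{\mu}>0$ the minimum could instead sit at $y=0$, and both the verification and the direct argument would fail.
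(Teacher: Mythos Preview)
Your proposal is correct and follows the same route as the paper, which states only that Theorem~\ref{th_TWAP} ``is obtained by a straightforward calculation using Theorem~\ref{th_verification}''; you have supplied precisely that calculation with the natural candidate $v(t,c,x,s)=c+s\Phi(x)$, and your analysis of the key inequality $g(y)\ge h(\nu_h)y-\tilde{\mu}$ via the shape of $h$ and the identity $G_h(\nu_h)=\tilde{\mu}$ is accurate. The parenthetical direct argument through Theorem~\ref{th_eg} is a clean alternative that avoids the verification machinery altogether.
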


This theorem implies the robustness of the optimality of the TWAP strategy for general shaped MI functions. 
When $x_0$ is small, the optimal execution strategy is to sell the security at speed $\nu _h (> \bar{x}_0)$ 
until the time when the remaining shares become zero. 

\begin{remark} \ 
\begin{itemize}
 \item [(i)] As mentioned in Theorems 4.2 and 5.1 in \cite {Kato_FS}, 
when $g(x) = \alpha x$ $(\alpha > 0)$ is given as a linear function, we have 
\begin{eqnarray}\label{J_linear}
J(T, c_0, x_0, s_0) = c_0 + s_0\cdot \frac{1 - e^{-\alpha x_0}}{\alpha },
\end{eqnarray}
and the corresponding nearly optimal execution strategy is a quasi-block liquidation with the initial time; that is, 
$\hat{x}^\delta _t = (x_0/\delta )1_{[0, \delta ]}(t)$ with $\delta \rightarrow 0$. 
This strategy formally corresponds to (\ref {TWAP}) taking the limit $\nu _h\rightarrow \infty $. 
Note that $h(x) \equiv \alpha $; hence (\ref {J_TWAP}) coincides with (\ref {J_linear}). 
 \item [(ii)] Let us consider an extreme case where 
\begin{eqnarray}\label{extreme_MI}
g(x) = \hat{g}(x - \bar{x}_0)1_{[\bar{x}_0, \infty )}(x)
\end{eqnarray}
for some increasing convex function 
$\hat{g}\in C^1([0, \infty ) ; [0, \infty ))$ with $\hat{g}(0) = \hat{g}'(0) = 0$ and $\hat{g}'(\infty ) = \infty $. 
The form of $g(x)$ is shown in Figure \ref {fig_extreme_g} for $\hat{g}(x)$ set as $x^3$. 
In this case, we can completely avoid the MI cost by selling at a speed lower than or equal to $\bar{x}_0$. 
Therefore, the optimal execution strategy seems to be $\tilde{x}_t = \bar{x}_01_{[0, x_0/\bar{x}_0]}(t)$ at a glance. 
Following the strategy, $(\tilde{x}_t)_t$, we get the expected proceeds
\begin{eqnarray*}
\tilde{\mathcal {C}} := s_0\int ^{x_0/\bar{x}_0}_0e^{-\tilde{\mu }t}\bar{x}_0dt = s_0\iota (\tilde{\mu }/\bar{x}_0 ; x_0), 
\end{eqnarray*}
where $\iota (y ; x) = (1 - e^{-xy})/y$. 
However, Theorem \ref {th_TWAP} implies that this strategy is not optimal; the optimal execution speed, $\nu _h$, is strictly greater than $\bar{x}_0$. 
We compare the expected proceeds 
\begin{eqnarray*}
\hat{\mathcal {C}} := J(T, 0, x_0, s_0) = s_0\iota (h(\nu _h) ; x_0)
\end{eqnarray*}
obtained by the optimal strategy, $(\hat{x}_t)_t$, with $\tilde{\mathcal {C}}$ obtained by $(\tilde{x}_t)_t$. 
Let us denote $\hat{G}(x) = x\hat{g}'(x) - \hat{g}(x)$. 
Then, we see that $\hat{G}(\nu _h - \bar{x}_0) > \hat{G}(0) = 0$, and thus 
\begin{eqnarray*}
\tilde{\mu } = G_h(\nu _h) = \hat{G}(\nu _h - \bar{x}_0) + \bar{x}_0h(\nu _h) > \bar{x}_0h(\nu _h). 
\end{eqnarray*}
This implies that $h(\nu _h) < \tilde{\mu } / \bar{x}_0$. 
Because $\iota (\cdot \hspace{1mm}; x_0)$ is decreasing, we have $\hat{\mathcal {C}} > \tilde{\mathcal {C}}$. 
This is because selling at a lower speed increases the execution time and the timing cost. 
The trader should sell with the optimal speed, $\nu _h$, and accept the MI cost. 
\end{itemize}
\end{remark}

\begin{figure}[!h]
\centerline{\includegraphics[height = 8cm,width=12cm]{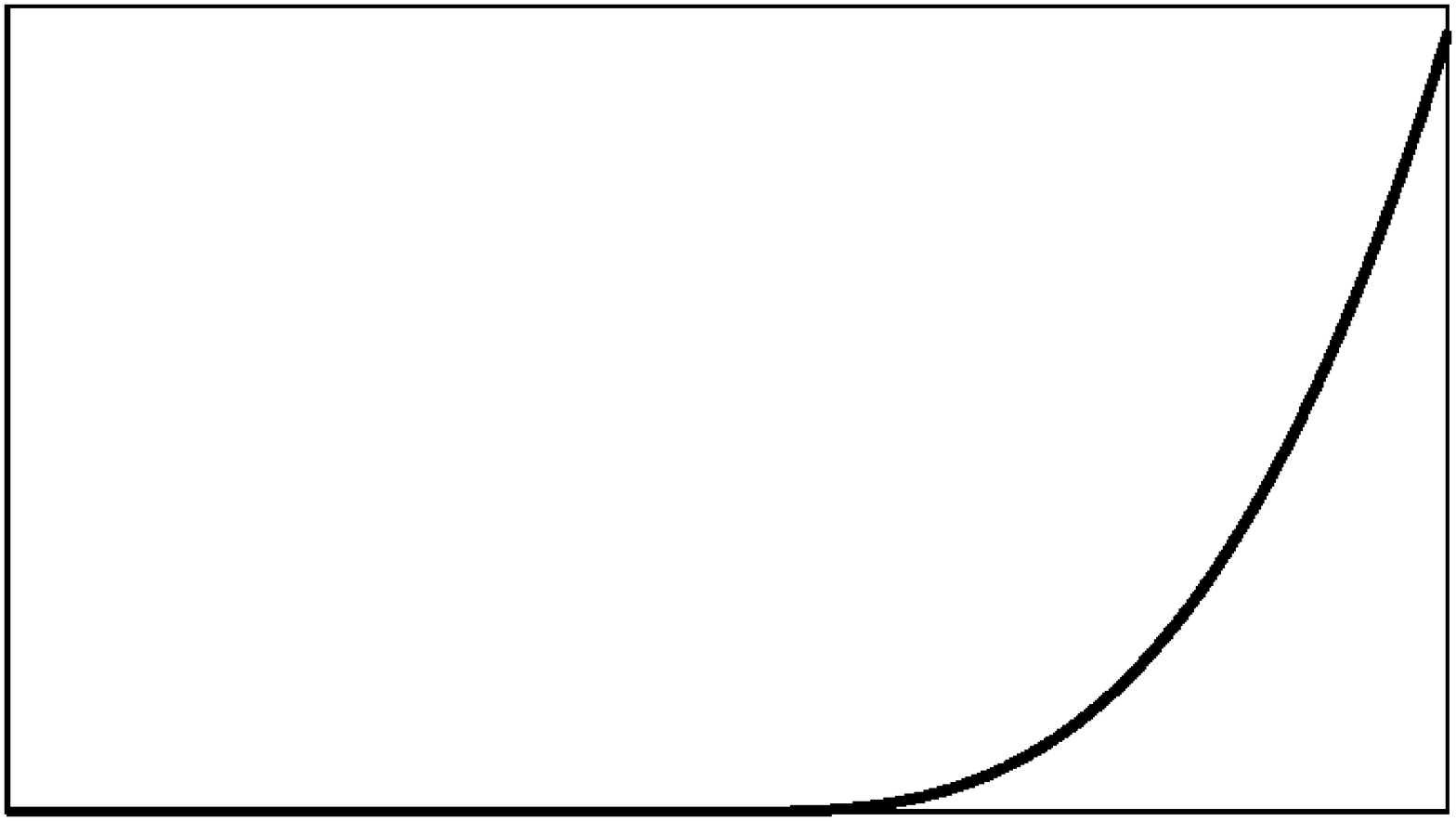}}
\caption{Form of the MI function, $g(x)$, defined as (\ref {extreme_MI}) with $\hat{g}(x) = x^3$. 
The horizontal axis corresponds to $x$. The vertical axis corresponds to $g(x)$. }
\label{fig_extreme_g}
\end{figure}

\subsection{Generalization of a Previous Result in Ishitani and Kato \cite {Ishitani-Kato_COSA2}}

As an application of Theorem \ref {th_TWAP}, 
we provide an analytical solution to an optimal execution problem with uncertain MI studied in 
Section 5.2 of \cite {Ishitani-Kato_COSA2}. 
We consider the optimization problem 
\begin{eqnarray}\label{def_J_Levy}
\sup _{(x_t)_t\in \mathcal {A}_T(x_0)}\E \left [\int ^T_0S_tx_tdt\right ], 
\end{eqnarray}
where $(S_t)_t$ is given by the SDE:
\begin{eqnarray*}
dS_t = S_t(-\tilde{\mu }dt + \sigma dB_t - g(x_t)dL_t), \ \ S_0 = s_0. 
\end{eqnarray*}
Here, $(L_t)_t$ is the L\'evy process, which is independent of $(B_t)_t$ and whose distribution is given by the Gamma distribution
\begin{eqnarray*}
P(L_t - \gamma t \in dz) = \frac{1}{\Gamma (\alpha _1t)\beta _1^{\alpha _1t}}z^{\alpha _1t - 1}e^{-z/\beta _1}1_{(0, \infty )}(z)dz, 
\end{eqnarray*}
where $\alpha _1, \beta _1, \gamma > 0$ satisfy $\alpha _1\beta _1 \leq 8\gamma $ and $\Gamma (z) = \int ^\infty _0t^{z-1}e^{-t}dt$ is the Gamma function. 
Moreover, we assume that $g(x) = \alpha _0x^2$ is given as a quadratic function with $\alpha _0 \geq 0$) 

In Section 5.2 of \cite {Ishitani-Kato_COSA2}, 
we do not find the explicit form of the optimal strategy to (\ref {def_J_Levy}), even when $x_0$ is small. 
However, numerical experiments suggest that the optimal strategy with small $x_0$ is the TWAP strategy. 
Here, we prove mathematically that this conjecture is true. 

\begin{theorem}\label{th_TWAP_Levy}Let $\hat{\nu }$ be the solution to
\begin{eqnarray*}
\gamma \alpha _0\hat{\nu }^2 + 
\alpha _1\left\{ 2\left( 1 - \frac{1}{1 + \alpha _0\beta _1\hat{\nu }^2}\right) - \log (\alpha _0\beta _1\hat{\nu }^2 + 1) \right\}  = \tilde{\mu }. 
\end{eqnarray*}
If $x_0 \leq \hat{\nu }T$, the optimal strategy for $(\ref {def_J_Levy})$ is given by the TWAP strategy
\begin{eqnarray}\label{TWAP_Levy}
\hat{x}_t = \hat{\nu }1_{[0, x_0/\hat{\nu }]}(t). 
\end{eqnarray}
\end{theorem}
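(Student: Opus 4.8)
The plan is to recognize \eqref{def_J_Levy} as an ordinary (non-L\'evy) optimal execution problem of the kind covered by Theorems~\ref{th_eg}, \ref{th_TWAP} and \ref{th_verification}, but with the impact function $g(x)=\alpha_0x^2$ replaced by an \emph{effective} impact function $\tilde g:=\Psi\circ g$, where $\Psi(\lambda)=\gamma\lambda+\alpha_1\log(1+\beta_1\lambda)$ is the log-Laplace exponent of $L_1$, i.e.\ $\E[e^{-\lambda L_t}]=e^{-t\Psi(\lambda)}$. Since the trader is risk-neutral, the argument behind Proposition~5.2 of \cite{Kato_FS} (which underlies Theorem~\ref{th_eg}) lets us restrict the supremum in \eqref{def_J_Levy} to deterministic strategies; and for a deterministic $(x_r)_r$, the independent increments of $L$ together with its independence from $B$ give $\E[S_t]=s_0e^{-\tilde\mu t}\E[\exp(-\int_0^tg(x_v)\,dL_v)]=s_0e^{-\tilde\mu t}\exp(-\int_0^t\Psi(g(x_v))\,dv)$. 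Hence the value of \eqref{def_J_Levy} equals $s_0W(T,x_0)$, with $W$ the function of Theorem~\ref{th_eg} associated with the impact function $\tilde g(x)=\Psi(\alpha_0x^2)=\gamma\alpha_0x^2+\alpha_1\log(1+\alpha_0\beta_1x^2)$, and a deterministic strategy is optimal for \eqref{def_J_Levy} if and only if it is optimal for that auxiliary problem.

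Second, I would record the properties of $\tilde g$ and identify $\hat\nu$. Writing $\tilde h:=\tilde g'$ and $a:=\alpha_0\beta_1$, one has $\tilde h(x)=2\alpha_0x(\gamma+\alpha_1\beta_1/(1+ax^2))$ and $\tilde g''(x)=2\gamma\alpha_0+2\alpha_1a(1-ax^2)/(1+ax^2)^2$. Since $u\mapsto(1-u)/(1+u)^2$ attains minimum $-1/8$ at $u=3$, $\min_x\tilde g''=(\alpha_0/4)(8\gamma-\alpha_1\beta_1)\ge0$ \emph{exactly} under the standing hypothesis $\alpha_1\beta_1\le8\gamma$; thus $\tilde g$ is convex, $\tilde h$ is strictly increasing with $\tilde h(0)=0$, $\tilde h(\infty)=\infty$, and $G_{\tilde h}(x):=x\tilde h(x)-\tilde g(x)$ satisfies $G_{\tilde h}'(x)=x\tilde g''(x)\ge0$, $G_{\tilde h}(0)=0$, $G_{\tilde h}(\infty)=\infty$. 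Hence there is a unique $\hat\nu>0$ with $G_{\tilde h}(\hat\nu)=\tilde\mu$, and using $ax^2/(1+ax^2)=1-1/(1+ax^2)$ one checks that $G_{\tilde h}(x)=\gamma\alpha_0x^2+\alpha_1\{2(1-1/(1+\alpha_0\beta_1x^2))-\log(1+\alpha_0\beta_1x^2)\}$, so $G_{\tilde h}(\hat\nu)=\tilde\mu$ is precisely the equation defining $\hat\nu$ in the statement. (This $\hat\nu$ is the $\nu_h$ of Proposition~\ref{prop_nu} applied to $\tilde g$; note $\tilde g$ has $\bar x_0=0$, and although it fails the ``$\lim_{x\to0}h(x)=\infty$'' half of [A4], that part is not needed below.)

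Third, I would conclude with the verification argument. Take the stationary candidate $w(t,x)=(1-e^{-\tilde h(\hat\nu)x})/\tilde h(\hat\nu)$ and $v(t,c,x,s)=c+s\,w(t,x)$. Then $1-w_x=\tilde h(\hat\nu)w$, and by convexity of $\tilde g$ (with $\tilde g'=\tilde h$ strictly increasing) the infimum $\inf_{y\ge0}\{\tilde g(y)-\tilde h(\hat\nu)y\}$ is attained at $y=\hat\nu$ and equals $\tilde g(\hat\nu)-\tilde h(\hat\nu)\hat\nu=-G_{\tilde h}(\hat\nu)=-\tilde\mu$; so $w$ is a classical solution of the HJB equation \eqref{HJB_W} (for $\tilde g$) with $w(t,0)=0$, hence $v\in C^{1,1,1,2}$, has linear growth, and $v(0,c,x,s)=c+s\,w(0,x)\ge c=u_{\mathrm{RN}}(c,x,s)$, and $v$ satisfies condition (iii) of Theorem~\ref{th_verification}. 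Since $x_0\le\hat\nu T$, the TWAP strategy $\hat x_t=\hat\nu\,1_{[0,x_0/\hat\nu]}(t)$ is admissible, and by Step~1 together with $(\tilde\mu+\tilde g(\hat\nu))/\hat\nu=\tilde h(\hat\nu)$ (i.e.\ $G_{\tilde h}(\hat\nu)=\tilde\mu$) it attains $\E[\int_0^TS_t\hat x_t\,dt]=s_0(1-e^{-\tilde h(\hat\nu)x_0})/\tilde h(\hat\nu)=v(T,0,x_0,s_0)$. Theorem~\ref{th_verification} (equivalently, re-running the proof of Theorem~\ref{th_TWAP} with $\tilde g$ in place of $g$) then shows that $(\hat x_t)_t$ is optimal for \eqref{def_J_Levy}, which is the assertion.

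The main obstacle is Step~1. Two points need care: (a) justifying that risk-neutrality still permits restricting to deterministic strategies when the price is driven by the jump process $L$ — admissible strategies may adapt to $L$, so one must reproduce the martingale/Fubini argument behind Proposition~5.2 of \cite{Kato_FS} while exploiting the independence of $L$ and $B$ — and (b) the well-posedness of $S$ when $\int_0^tg(x_v)\,dL_v$ can explode, handled, as in the first Remark of Section~\ref{sec_model} and in Section~\ref{sec_supplement}, by defining $S$ through the log-price and setting $S=0$ after explosion (which also makes $\E[S_t]=s_0e^{-\tilde\mu t}\E[\exp(-\int_0^tg(x_v)\,dL_v)]$ exact). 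The hypothesis $\alpha_1\beta_1\le8\gamma$ enters only to make $\tilde g$ convex, so that the infimum in \eqref{HJB_W} is attained at the interior point $\hat\nu$ and $\hat\nu$ is uniquely determined; without it $\tilde g$ could acquire a concave stretch and one would be back in the genuinely S-shaped regime of Theorem~\ref{th_S-shaped}. (When $\alpha_0=0$ the problem degenerates to the impact-free case and, consistently with the Remark following Theorem~\ref{th_TWAP}, the optimum is block liquidation — formally $\hat\nu\to\infty$.)
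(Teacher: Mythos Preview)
Your proposal is correct and follows essentially the same route as the paper. The paper's proof simply cites Theorem~5.2 of \cite{Ishitani-Kato_COSA2} for your Step~1 (the reduction of \eqref{def_J_Levy} to a deterministic problem with effective impact $\hat g(x)=\gamma\alpha_0x^2+\alpha_1\log(1+\alpha_0\beta_1x^2)$, which is exactly your $\tilde g=\Psi\circ g$), then shows $\hat g$ is strictly convex via the equivalent bound $\hat g''(x)\ge \alpha_0\alpha_1\beta_1(\alpha_0\beta_1x^2-3)^2/\bigl(4(\alpha_0\beta_1x^2+1)^2\bigr)$ under $\alpha_1\beta_1\le 8\gamma$, and finally invokes Theorem~\ref{th_TWAP} directly rather than unfolding the verification argument as you do; so the ``main obstacle'' you flag in Step~1 is already handled by that external reference, and the rest of your argument matches the paper's.
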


\section{Concluding Remarks}\label{sec_conclusion}

In this paper, we studied the optimal execution problem with S-shaped MI functions as a continuation of \cite {Kato_TJSIAM}. 
We showed that our value function is characterized as a viscosity solution of the corresponding HJB equation. 
This is an extended result of that in \cite {Kato_FS}. 
Moreover, we provided the verification theorem to show that 
the optimal execution speed is not in the range $(0, \bar{x}_0]$. 
This implies that the trader should not blindly decrease the execution speed to reduce the MI cost. 

In the Black--Scholes market model, we found that 
an optimal execution strategy is the TWAP strategy when the number of shares of the security held is small. 
A concrete form is not required for the MI function, $g$, so this result is robust and suggests the optimality of TWAP strategy in practice. 

The volume-weighted average price (VWAP) strategy is widely used in trading practice rather than the TWAP strategy \cite {Madhavan}. 
Gatheral and Schied \cite {Gatheral-Schied} pointed out that we should regard the time parameter, $t$, not as physical time but as volume time. 
Volume time implies a stochastic clock, which is measured by a market trading volume process \cite {Ane-Geman, Geman, Mazur, Veraat-Winkel}. 
If we consider the model on a volume time line, we may find the optimality of the VWAP strategy in a similar way to Theorem \ref {th_TWAP}. 
However, we should not ignore the randomness of the market trading volume. 
One of our future tasks is to construct a model of optimal execution with S-shaped MI functions on a volume time line. 

Furthermore, to apply Theorem \ref {th_S-shaped}, we require the value function, $J$, to be smooth, whereas it is difficult to show smoothness in general. 
Moreover, the solvability of SDE (\ref {SDE_S-shaped}) is not clear. 
Further study is needed. 

\appendix 

\section{Supplemental Arguments}\label{sec_supplement}

We present the following propositions, which link the results in our previous study \cite {Kato_TJSIAM} with the present model. 

\begin{proposition}\label{prop_unique_existence}Let $t > 0$ and let $(c, x, s)\in D$. 
For each $(x_r)_{r\leq t}\in \mathcal {A}_t(x)$, 
there is a unique process, $(C_r, X_r, S_r)_{r\leq t}$, that satisfies $(\ref {notation_SDE})$ and $(C_0, X_0, S_0) = (c, x, s)$. 
\end{proposition}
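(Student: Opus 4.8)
\medskip
\noindent\emph{Proof strategy.}
The plan is to solve the three equations in (\ref{notation_SDE}) one after another, exploiting their triangular structure: the equation for $X$ is autonomous, and the equation for $S$ does not involve $C$. First I would dispose of $X$: since $(x_r)_r\in\mathcal A_t(x)$, the map $r\mapsto\int_0^rx_v\,dv$ is finite, non-decreasing and absolutely continuous, so $X_r:=x-\int_0^rx_v\,dv$ is the unique continuous process with $dX_r=-x_r\,dr$, $X_0=x$, and $0\le X_r\le x$. Once $S$ has been constructed, the $C$-equation is solved by $C_r:=c+\int_0^rx_vS_v\,dv$; the only point to verify is $\int_0^tx_vS_v\,dv<\infty$ a.s., which will follow from the bound $S_v\le\const\cdot\exp(-\int_0^vg(x_u)\,du)$ obtained below together with $\int_0^tx_v\,dv\le x$.

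The substance is the construction and uniqueness of $S$. For $s=0$ one takes $S\equiv0$, which solves the $S$-equation since $\hat b(0)=\hat\sigma(0)=0$; so assume $s>0$. Put $A_r=\int_0^rg(x_v)\,dv\in[0,\infty]$; by monotone convergence $A$ is continuous, non-decreasing and adapted, and $\tau:=\inf\{r\in[0,t]:A_r=\infty\}$ (with $\tau=t$ if $A_t<\infty$) satisfies $A_{\tau-}=\infty$ on $\{A_\tau=\infty\}$. Localising at the stopping times $\{A_r=n\}$, on which $A$ is bounded and continuous, the SDE
\[
d\hat Y_r=b(\hat Y_r-A_r)\,dr+\sigma(\hat Y_r-A_r)\,dB_r,\qquad \hat Y_0=\log s,
\]
has coefficients bounded and Lipschitz in the space variable uniformly in $(r,\omega)$ (with the Lipschitz constants of $b,\sigma$), hence a unique strong solution, and these patch into a solution $\hat Y$ on $[0,\tau)$. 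Setting $Y_r=\hat Y_r-A_r$, It\^o's formula shows that $S_r:=e^{Y_r}$ solves $dS_r=\hat b(S_r)\,dr+\hat\sigma(S_r)\,dB_r-S_rg(x_r)\,dr$ on $[0,\tau)$, using $\hat b(s)=s(b(\log s)+\tfrac12\sigma(\log s)^2)$ and $\hat\sigma(s)=s\sigma(\log s)$. Because $b$ and $\sigma$ are bounded, $\hat Y_r$ is $\log s$ plus a drift bounded by $\|b\|_\infty t$ plus a continuous $L^2$-bounded martingale, so it converges a.s.\ as $r\uparrow\tau$; hence $Y_r\to-\infty$ and $S_r\to0$ there whenever $A_\tau=\infty$. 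I therefore set $S_r=0$ for $r\ge\tau$ (the cemetery convention of Section~\ref{sec_supplement}), which makes the $S$-equation hold trivially on $[\tau,t]$; the bound $S_v\le\const\cdot e^{-A_v}$ then yields $\int_0^tS_vg(x_v)\,dv\le\const\cdot(1-e^{-A_t})<\infty$ and, together with $|\hat b(s)|\le\const\cdot s$ and $|\hat\sigma(s)|\le\const\cdot s$, makes all integrals in (\ref{notation_SDE}) finite along the path.

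For uniqueness, $X$ and (given $S$) $C$ are forced by the explicit formulas above, so it suffices to treat $S$, which we understand to be $[0,\infty)$-valued (built into the notion of solution, cf.\ Section~\ref{sec_supplement}). Let $S,S'$ be two such solutions and $\rho=\inf\{r:S_r\wedge S_r'=0\}$. On $[0,\rho)$ both $\log S$ and $\log S'$ solve $dY_r=b(Y_r)\,dr+\sigma(Y_r)\,dB_r-g(x_r)\,dr$, and pathwise uniqueness for this SDE with bounded Lipschitz $b,\sigma$ and common inhomogeneity $-g(x_r)\,dr$ (localised where the logarithms stay in a compact set) gives $S=S'$ on $[0,\rho)$, hence on $[0,\rho]$ by continuity since $S_\rho=S_\rho'=0$. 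On $[\rho,t]$, using $0\le S_rg(x_r)$ and $|\hat b(s)|,|\hat\sigma(s)|\le\const\cdot s$, a localised Gronwall estimate for $\E[S_{r\wedge\theta_n}]$ with $\theta_n=\inf\{r\ge\rho:\int_\rho^rg(x_v)\,dv\ge n\}$, iterated across the finitely many sub-intervals on which $\int g(x_v)\,dv$ is finite, forces $S\equiv S'\equiv0$ there.

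The step I expect to be the main obstacle is handling the possible blow-up of $A_t=\int_0^tg(x_v)\,dv$ (genuinely possible for an admissible strategy, since $g$ is superlinear at infinity while only $\int_0^tx_v\,dv\le x$ is imposed): making rigorous that $\log S$ reaches $-\infty$ in finite --- possibly zero --- time, pinning down the precise sense in which the resulting (then not everywhere continuous) $S$ solves (\ref{notation_SDE}) with absorbing state $0$, and carrying the bound $S_v\le\const\cdot e^{-A_v}$ through the uniqueness argument. Once that convention is fixed, the remainder is routine SDE theory for bounded Lipschitz coefficients.
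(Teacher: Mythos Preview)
Your proposal is correct and follows essentially the same route as the paper: localise at the stopping times where $\int_0^r g(x_v)\,dv$ stays bounded, solve the log-price SDE there by standard Lipschitz theory, patch the pieces together, show $S_r\to 0$ as the accumulated impact blows up and absorb $S$ at $0$ thereafter, then prove uniqueness via pathwise uniqueness of the log-SDE before absorption and the absorbing property of $0$ afterwards. The one noteworthy difference is your argument that $S_r\to 0$ at the blow-up time: you observe directly that $\hat Y_r=Y_r+A_r$ has bounded drift and an $L^2$-bounded martingale part and hence converges a.s., whereas the paper proves a separate moment estimate (its Lemma~\ref{lem_moment}) on $\sup_r\exp(\int_0^r\sigma(Y_v)\,dB_v)$ to reach the same conclusion; your version is slightly more economical, while the paper's lemma is reusable elsewhere.
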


The comparison theorem for solutions of SDEs (see Proposition 5.2.18 in \cite {Karatzas-Shreve} for instance) tells us that 
\begin{eqnarray}\label{est_S_Z}
0\leq S_r \leq Z_r(s) \ \ \mbox{a.s.}, 
\end{eqnarray}
where $(Z_r(s))_r$ is defined in (\ref {Z}). 
Moreover, Lemma B.1 in \cite {Kato_FS} tells us that 
\begin{eqnarray}\label{est_Z_hat}
\E [\sup _{0\leq r\leq t}Z_r(s)^m] < \infty 
\end{eqnarray}
for each $t, s$ and $m > 0$. 
Based on (\ref {est_S_Z})--(\ref {est_Z_hat}), we see that our value function, $J(t, c, x, s)$, is well-defined and finite.

\begin{proposition}\label{prop_compare_value_functions}
$J(t, c, x, s) = J^\infty (t, c, x, s)$. 
\end{proposition}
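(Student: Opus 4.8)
The plan is to prove the two inequalities $J^\infty(t,c,x,s)\le J(t,c,x,s)$ and $J(t,c,x,s)\le J^\infty(t,c,x,s)$ separately. The first is immediate, since $\mathcal{A}^\infty_t(x)\subseteq\mathcal{A}_t(x)$ and hence the supremum defining $J^\infty$ in $(\ref{def_J_infty})$ is taken over a subset of the feasible set in $(\ref{def_J})$. For the reverse inequality, fix an arbitrary $(x_r)_{r\le t}\in\mathcal{A}_t(x)$ and, for $N\in\mathbb{N}$, set $x^N_r=x_r\wedge N$. Each $(x^N_r)_r$ is non-negative, $(\mathcal{F}_r)$-progressively measurable and essentially bounded, with $\int^t_0 x^N_r\,dr\le\int^t_0 x_r\,dr\le x$ a.s., so $(x^N_r)_r\in\mathcal{A}^\infty_t(x)$. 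Let $(C^N_r,X^N_r,S^N_r)_r$ be the controlled process driven by $(x^N_r)_r$ via $(\ref{notation_SDE})$ with $(C^N_0,X^N_0,S^N_0)=(c,x,s)$; since $g(x^N_r)\le g(N)<\infty$, the corresponding log-price equation $(\ref{SDE_Y})$ is perturbed only by a bounded drift and $S^N$ stays strictly positive, so no divergence occurs.

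It then suffices to show that $(C^N_t,X^N_t,S^N_t)\to(C_t,X_t,S_t)$ a.s.\ as $N\to\infty$ and to pass to the limit under the expectation. Since $x^N_r\uparrow x_r$ for every $r$, monotone convergence gives $X^N_t=x-\int^t_0 x^N_r\,dr\to X_t$ a.s. Because $h=g'\ge 0$ by [A1], $g$ is non-decreasing, so $g(x^N_r)\le g(x^{N+1}_r)\le g(x_r)$; applying the comparison theorem for one-dimensional SDEs to the equations $dS=(\hat b(S)-Sg(x^N_r))\,dr+\hat\sigma(S)\,dB$ (as in the derivation of $(\ref{est_S_Z})$) shows that $N\mapsto S^N_r$ is non-increasing. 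Letting $N\to\infty$ and recalling that, by the construction underlying Proposition \ref{prop_unique_existence} (with $S_r=0$ once $Y_r$ has diverged to $-\infty$, cf.\ Remark 1(ii)), $S_r$ is precisely the monotone limit of the price processes driven by $x_r\wedge N$, we obtain $S^N_r\downarrow S_r$ a.s.\ for every $r$. Consequently $x^N_r S^N_r\to x_r S_r$ for a.e.\ $r$, and since $0\le x^N_r S^N_r\le x_r Z_r(s)$ with $\int^t_0 x_r Z_r(s)\,dr\le x\sup_{r\le t}Z_r(s)<\infty$ a.s.\ by $(\ref{est_S_Z})$ and $(\ref{est_Z_hat})$, dominated convergence yields $C^N_t=c+\int^t_0 x^N_r S^N_r\,dr\to C_t$ a.s.

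Finally, $u$ is continuous, so $u(C^N_t,X^N_t,S^N_t)\to u(C_t,X_t,S_t)$ a.s. Using $0\le X^N_t\le x$, $0\le S^N_t\le Z_t(s)$ and $0\le C^N_t\le c+x\sup_{r\le t}Z_r(s)$ together with the non-negativity and polynomial growth of $u\in\mathcal{C}$, the family $\{u(C^N_t,X^N_t,S^N_t)\}_N$ is bounded above by a single integrable random variable (finite by $(\ref{est_Z_hat})$), so dominated convergence gives $\E[u(C^N_t,X^N_t,S^N_t)]\to\E[u(C_t,X_t,S_t)]$. Since $(x^N_r)_r\in\mathcal{A}^\infty_t(x)$ implies $\E[u(C^N_t,X^N_t,S^N_t)]\le J^\infty(t,c,x,s)$ for every $N$, we conclude $\E[u(C_t,X_t,S_t)]\le J^\infty(t,c,x,s)$, and taking the supremum over $(x_r)_r\in\mathcal{A}_t(x)$ gives $J\le J^\infty$, which combined with the first inequality proves $J=J^\infty$. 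The step I expect to be the main obstacle is the identification $\lim_N S^N_r=S_r$: it is the statement that the map ``execution speed $\mapsto$ price process'' is continuous along the truncation $x_r\wedge N$ in the presence of a possibly divergent $Y$, and it has to be extracted from the localization procedure used to establish Proposition \ref{prop_unique_existence} rather than from any a priori estimate; everything else is the routine truncation-plus-dominated-convergence argument sketched above.
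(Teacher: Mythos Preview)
Your truncation $x^N_r=x_r\wedge N$ and the appeal to the comparison theorem to get $S^N_r\ge S_r$ are exactly what the paper does; the two proofs diverge only in how the limit $N\to\infty$ is taken. You aim for genuine convergence $(C^N_t,X^N_t,S^N_t)\to(C_t,X_t,S_t)$ and then dominated convergence, which forces you to prove $S^N_r\downarrow S_r$. Note that the construction in Proposition~\ref{prop_unique_existence} builds $S$ from the processes driven by $x_r1_{[0,\tau_n]}(r)$ (localization in \emph{time}), not by $x_r\wedge N$ (truncation in \emph{value}), so the identification $\lim_N S^N=S$ is not literally supplied there; it would need a separate SDE stability argument, in particular on $\{r\ge\tau\}$ where $S_r=0$ but each $S^N_r$ may still be positive.

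The paper sidesteps this obstacle entirely. From $S^N_r\ge S_r$ alone one has $C^N_t\ge c+\int_0^t x^N_r S_r\,dr$, and monotone convergence on $x^N_r S_r\uparrow x_r S_r$ gives $\liminf_N C^N_t\ge C_t$. Because $u\in\mathcal{C}$ is \emph{non-decreasing} in each coordinate (not merely continuous), the one-sided relations $X^N_t\ge X_t$, $S^N_t\ge S_t$ and $\liminf_N C^N_t\ge C_t$ already yield $u(C_t,X_t,S_t)\le\liminf_N u(C^N_t,X^N_t,S^N_t)$, and Fatou's lemma (rather than dominated convergence) delivers $\E[u(C_t,X_t,S_t)]\le J^\infty$. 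Thus the monotonicity of $u$, which you used only for the upper bound in the final dominated-convergence step, is precisely what makes the exact convergence of $S^N$ unnecessary. (A minor correction: your bound $0\le C^N_t$ fails when $c<0$; use $c\le C^N_t\le |c|+x\sup_{r\le t}Z_r(s)$ instead.)
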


In \cite {Kato_TJSIAM}, we show some properties of $J^\infty (t, c, x, s)$. 
Proposition \ref {prop_compare_value_functions} implies that these results also hold for $J(t, c, x, s)$.

\section{Proofs}\label{sec_proof}

\begin{proof}[Proof of Proposition \ref {prop_nu}]
First, [A3] implies that 
\begin{eqnarray}\label{est_G1}
G_h(\bar{x}_0) = \bar{x}_0h(\bar{x}_0) - \int ^{\bar{x}_0}_0h(x)dx \leq \bar{x}_0h(\bar{x}_0) - \bar{x}_0h(\bar{x}_0) = 0. 
\end{eqnarray}
[A3] also tells us that 
\begin{eqnarray}\label{est_G2}
G_h(x) - G_h(y) \geq (h(x) - h(y))y > 0
\end{eqnarray}
for each $x > y > \bar{x}_0$. 
Hence, $G_h$ is strictly increasing on $(\bar{x}_0, \infty )$. 
Moreover, letting $x\rightarrow \infty $ in (\ref {est_G2}), we see that 
\begin{eqnarray}\label{est_G3}
\lim _{x\rightarrow \infty }G_h(x) = \infty, 
\end{eqnarray}
owing to condition [A4]. 
Because $G_h$ is continuous on $[\bar{x}_0, \infty )$ and $\tilde{\mu }$ is positive, 
(\ref {est_G1})--(\ref {est_G3}) immediately give the assertion. 
\end{proof}

To show Proposition \ref {prop_unique_existence}, we prepare a lemma. 

\begin{lemma}\label{lem_moment}
Let $(\varphi _t)_t$ be an $(\mathcal {F}_t)_t$-progressively measurable process such that 
$|\varphi _t|\leq K$ for some positive constant $K$. 
Then, there is a $C_{K, T} > 0$ that depends only on $K$ and $T$, such that 
\begin{eqnarray*}
\E \left[ \sup _{0\leq t\leq T}\exp \left( \int ^t_0\varphi _rdB_r \right) \right] \leq C_{K, T}. 
\end{eqnarray*}
\end{lemma}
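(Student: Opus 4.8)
The plan is to reduce the claim to a maximal inequality for an exponential (Doléans--Dade) martingale. Write $M_t = \int_0^t \varphi_r\,dB_r$, a continuous local martingale whose quadratic variation satisfies $\langle M\rangle_t = \int_0^t \varphi_r^2\,dr \le K^2 T$ for all $t\in[0,T]$, by the assumed bound $|\varphi_t|\le K$. Since $\E[\exp(\tfrac12\langle M\rangle_T)]\le \exp(\tfrac12 K^2 T)<\infty$, Novikov's criterion guarantees that $\mathcal{E}_t := \exp(M_t - \tfrac12\langle M\rangle_t)$ is a genuine (not merely local) martingale on $[0,T]$ with $\E[\mathcal{E}_t]=1$; the same reasoning applied to $2M$ (whose integrand $2\varphi$ is bounded by $2K$) shows that $\exp(2M_t - 2\langle M\rangle_t)$ is also a true martingale with expectation $1$.

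First I would bound the second moment of $\mathcal{E}$. Using $\mathcal{E}_t^2 = \exp(2M_t - 2\langle M\rangle_t)\cdot\exp(\langle M\rangle_t) \le \exp(K^2 T)\exp(2M_t - 2\langle M\rangle_t)$ and taking expectations gives $\E[\mathcal{E}_t^2]\le \exp(K^2 T)$ uniformly in $t\in[0,T]$. Doob's $L^2$ maximal inequality then yields $\E[\sup_{0\le t\le T}\mathcal{E}_t^2]\le 4\,\E[\mathcal{E}_T^2]\le 4\exp(K^2 T)$, and hence, by Jensen (or Cauchy--Schwarz), $\E[\sup_{0\le t\le T}\mathcal{E}_t]\le 2\exp(\tfrac12 K^2 T)$.

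Finally I would put the pieces together: since $\exp(M_t) = \mathcal{E}_t\exp(\tfrac12\langle M\rangle_t)\le \exp(\tfrac12 K^2 T)\,\mathcal{E}_t$ for every $t\in[0,T]$, taking the supremum over $t$ and then expectations gives
\[
\E\Big[\sup_{0\le t\le T}\exp\Big(\int_0^t\varphi_r\,dB_r\Big)\Big] \le \exp\Big(\tfrac12 K^2 T\Big)\,\E\Big[\sup_{0\le t\le T}\mathcal{E}_t\Big] \le 2\exp(K^2 T),
\]
so the constant $C_{K,T} = 2\exp(K^2 T)$ works and depends only on $K$ and $T$, as required.

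There is no serious obstacle here; the only point that demands a little care is justifying that the relevant Doléans--Dade exponentials are true martingales rather than merely supermartingales, but this is immediate from Novikov's criterion because the quadratic variations are deterministically bounded. An alternative that avoids quoting Novikov is to localise along $\tau_n=\inf\{t:|M_t|\ge n\}$, observe that $\exp(2M_{t\wedge\tau_n}-2\langle M\rangle_{t\wedge\tau_n})$ is a bounded martingale, apply the $L^2$ maximal inequality to the stopped processes, and pass to the limit via Fatou's lemma; either route yields the same bound.
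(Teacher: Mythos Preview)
Your proof is correct. Both you and the paper work with the Dol\'eans--Dade exponential $\mathcal{E}_t=\exp(M_t-\tfrac12\langle M\rangle_t)$ (the paper calls this $N_t+1$) and ultimately apply Doob's $L^2$ maximal inequality, but you reach the $L^2$ bound on $\mathcal{E}$ by a different route. You invoke Novikov's criterion for $M$ and for $2M$, then compute $\E[\mathcal{E}_T^2]\le e^{K^2T}$ directly from the identity $\mathcal{E}_t^2=\exp(2M_t-2\langle M\rangle_t)\,e^{\langle M\rangle_t}$. The paper instead avoids Novikov entirely: it writes $N_t=\mathcal{E}_t-1$, localises along $\tau_R=\inf\{t:\langle N\rangle_t\ge R\}$, and derives the integral inequality $m_t^R\le 2K^2T+2K^2\int_0^t m_r^R\,dr$ for $m_t^R=\E[\langle N\rangle_{t\wedge\tau_R}]$, obtaining $\E[\langle N\rangle_T]\le 2K^2T+4K^2T^2e^{2K^2T}$ via Gronwall. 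Your argument is shorter and yields the cleaner explicit constant $C_{K,T}=2e^{K^2T}$; the paper's argument is more self-contained in that it does not cite Novikov, at the price of a bulkier constant. The alternative localisation argument you sketch at the end is essentially the paper's approach in spirit.
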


\begin{proof}
Put 
\begin{eqnarray*}
N_t = \exp \left( \int ^t_0\varphi _rdB_r - \frac{1}{2}\int ^t_0\varphi ^2_rdr \right) - 1. 
\end{eqnarray*}
Ito's formula immediately implies that $(N_t)_t$ is a continuous local martingale starting at $0$ 
and $d\langle N\rangle _t = (N_t + 1)^2\varphi ^2_tdt$. 

Take any $R > 0$ and define 
$\tau _R = \inf \{ t\geq 0\ ; \ \langle N\rangle _t\geq R \}\wedge T$ and 
$m^R_t = \E [\langle N\rangle _{t\wedge \tau _R}] (\leq R < \infty )$. 
Then, we observe 
\begin{eqnarray*}
0\leq m^R_t \leq 2K^2\E \left[ \int ^{t\wedge \tau _R}_0(N^2_r + 1)dr \right]  \leq 2K^2T + 2K^2\int ^t_0m^R_rdr. 
\end{eqnarray*}
We apply the Gronwall inequality to obtain 
\begin{eqnarray*}
m^R_t \leq 2K^2T + 4K^2T^2e^{2K^2T} =: C'_{K, T}. 
\end{eqnarray*}
The Chebyshev inequality implies that $\tau _R \nearrow T$, $R\rightarrow \infty $ a.s., and hence 
$\E [\langle N\rangle _T] = \lim _{R\rightarrow \infty }m^R_t\leq C'_{K, T}$ by the monotone convergence theorem. 
Now we arrive at 
\begin{eqnarray*}
\E \left[ \sup _{0\leq t\leq T}\exp \left( \int ^t_0\varphi _rdB_r \right) \right] \leq 
e^{K^2T/2}(2\E [\langle N\rangle _T]^{1/2} + 1) \leq e^{K^2T/2}\left (2\sqrt{C'_{K, T}} + 1\right ).  
\qedhere 
\end{eqnarray*}
\end{proof}

\begin{proof}[Proof of Proposition \ref {prop_unique_existence}]
It suffices to show the existence and uniqueness of process $(S_r)_{r\leq t}$ for each given $(x_r)_r\in \mathcal {A}_t(x)$ and $s > 0$. \\
{\it Step 1.} For each $n\in \mathbb {N}$, define 
\begin{eqnarray*}
\tau _n = \inf \left \{ r \geq 0\ ; \ \int ^r_0g(x_v)dv \geq n \right \}\wedge t
\end{eqnarray*}
and put $x^n_r = x_r1_{[0, \tau _n]}(r)$. 
Then we can show that there is a unique solution $(Y^n_r)_r$ to the following SDE by the standard argument: 
\begin{eqnarray*}
dY^n_r = b(Y^n_r)dr + \sigma (Y^n_r)dB_r - g(x^n_r)dr, \ \ Y^n_0 = \log s. 
\end{eqnarray*}
Ito's formula implies that the process $S^n_r := \exp (Y^n_r)$ satisfies 
\begin{eqnarray*}
dS^n_r = \hat{b}(S^n_r)dr + \hat{\sigma }(S^n_r)dB_r - S^n_rg(x^n_r)dr, \ \ S^n_0 = s. 
\end{eqnarray*}
We see that $\tau _n \leq \tau _m$ and $S^n_r = S^m_r$, $r\in [0, \tau _n]$ a.s.~for each $n < m$. 
Therefore, we can define $S^\infty _r = \lim _{n\rightarrow \infty }S^n_r$ for each $r\in [0, \tau )\cap [0, t]$ a.s., 
where $\tau = \lim _{n\rightarrow \infty }\tau _n$. 

Next, we show that $\lim _{r\rightarrow \tau }S^\infty _r = 0$ a.s.~on $\{\tau \leq t\}$. 
For each $\delta > 0$, we see that 
\begin{eqnarray*}
0\leq S^\infty _{\tau - \delta } = \lim _{n\rightarrow \infty }S^n_{\tau - \delta } \leq sD_tG_\delta  \ \ \mbox { on } \{\tau \leq t\}, 
\end{eqnarray*}
where 
\begin{eqnarray*}
D_t &=& \liminf _{n\rightarrow \infty }\sup _{0\leq r\leq t}\exp \left( \int ^r_0b(Y^n_v)dv + \int ^r_0\sigma (Y^n_v)dB_v \right) , \\
G_\delta  &=& \exp \left( -\int ^{\tau - \delta }_0g(x_r)dr \right) . 
\end{eqnarray*}
Because $b$ and $\sigma $ are bounded, Lemma \ref {lem_moment} implies that $\E [D_t] < \infty $, 
hence $D_t < \infty $ a.s. 
Moreover, based on the definition of $\tau$, it holds that $G_\delta 1_{\{\tau \leq t\}}\longrightarrow 0$, $\delta \rightarrow 0$ a.s. 
Thus, we have $\lim_{\delta \rightarrow 0}S^\infty _{\tau - \delta } = 0$ a.s.~on $\{\tau \leq t\}$. 

Therefore, we can define $S_r := S^\infty _{r\wedge \tau }$ as a continuous process on $[0, t]$, and it holds that 
\begin{eqnarray*}
&&s + \int ^r_0\hat{\sigma }(S_v)dB_v + \int ^r_0(\hat{b}(S_v) - S_vg(x_v))dv\\
&=& 
s + \int ^{r\wedge \tau }_0\hat{\sigma }(S^\infty _v)dB_v + 
\int ^{r\wedge \tau }_0(\hat{b}(S^\infty _v) - S^\infty _vg(x_v))dv\\
&=& 
\lim _{n\rightarrow \infty }S^n_{r\wedge \tau _n} = S_r, \ \ r\leq t. 
\end{eqnarray*}
Thus, $(S_r)_r$ satisfies (\ref {notation_SDE}). \\
{\it Step 2.} Next, we show the uniqueness of the solution to (\ref {notation_SDE}). 
Assume that $(\tilde{S}_r)_r$ satisfies (\ref {notation_SDE}) and $\tilde{S}_0 = s$. 
We see that $Y_{r\wedge \tau _n} = \log S_{r\wedge \tau _n}$ and 
$\tilde{Y}_{r\wedge \tau _n} = \log \tilde{S}_{r\wedge \tau _n}$ satisfy 
(\ref {SDE_Y}). 
Because $b$ and $\sigma $ are Lipschitz continuous, 
we have $\E [\sup _{0\leq r\leq t}|Y_{r\wedge \tau _n} -  \tilde{Y}_{r\wedge \tau _n}|^2] = 0$. 
This implies that 
$S_{r\wedge \tau _n} = \tilde{S}_{r\wedge \tau _n}$, $r\leq t$ a.s. 
Then, we have 
$S_{r\wedge \tau _n} = \tilde{S}_{r\wedge \tau _n}$, $r\leq t$, a.s. 
Letting $n\rightarrow \infty $, we arrive at $S_{r\wedge \tau } = \tilde{S}_{r\wedge \tau }$, $r\leq t$ a.s. 
Based on (\ref {notation_SDE}), $S_r = \tilde{S}_r = 0$ for each $r$ larger than $\tau $ a.s.~on $\{\tau \leq t\}$, 
so we conclude that $(S_r)_r$ is equal to $(\tilde{S}_r)_r$ a.s. 
\end{proof}

\begin{proof}[Proof of Proposition \ref {prop_compare_value_functions}]
Because $J(t, c, x, s) \geq J^\infty (t, c, x, s)$ is clear, 
we may prove the opposite inequality. 

Fix any $(x_r)_r\in \mathcal {A}_t(x)$ and denote by $(C_r, X_r, S_r)_{r\leq t}$ its controlled process. 
Take any $K > 0$ and set $x^K_r = x_r\wedge K$. 
Then, $(x^K_r)_r\in \mathcal {A}^\infty _t(x)$ holds. 
Let $(C^K_r, X^K_r, S^K_r)$ be the controlled process of $(x^K_r)_r$. 
Then, we have $X^K_t \geq X_t$. 
Moreover, Proposition 5.2.18 in \cite {Karatzas-Shreve} implies that $S^K_r\geq S_r$, $r\leq t$ a.s. 
Therefore, it holds that 
\begin{eqnarray*}
C^K_t = c + \int ^t_0x^K_rS^K_rdr \geq c + \int ^t_0x^K_rS_rdr\ \ a.s., 
\end{eqnarray*}
and the monotone convergence theorem tells us that $\liminf _{K\rightarrow \infty }C^K_t \geq C_t$ a.s. 
Because $u\in \mathcal {C}$, we have 
$u(C_t, X_t, S_t) \leq \liminf _{K\rightarrow \infty }u(C^K_t, X^K_t, S^K_t)$. 
Then, we apply Fatou's lemma to see that 
\begin{eqnarray*}
\E [u(C_t, X_t, S_t)]\leq \liminf _{K\rightarrow \infty }\E [u(C^K_t, X^K_t, S^K_t)] \leq J^\infty (t, c, x, s). 
\end{eqnarray*}
Because $(x_r)_r\in \mathcal {A}_t(x)$ is arbitrary, we complete the proof. 
\end{proof}

To prove Theorem \ref {th_viscosity}, we define $F : D\times \mathbb {R}^3\times \mathscr {S}\longrightarrow \mathbb {R}\cup \{-\infty \}$ by 
\begin{eqnarray*}
F(z, p, \Sigma ) &=& -\frac{1}{2}\hat{\sigma }(s)^2\Sigma _{ss} - \hat{b}(s)p_s + H(s, p), \\
H(s, p) &=& \inf _{y\geq 0}f(y ; s, p), \\
f(y ; s, p) &=& sp_sg(y) - (sp_c - p_x)y, 
\end{eqnarray*}
where $\mathscr {S}\subset \mathbb {R}^3\otimes \mathbb {R}^3$ is the set of three-dimensional real symmetric matrices, and we denote 
\begin{eqnarray*}
z = 
\left(
\begin{array}{c}
 	c\\
 	x\\
 s
\end{array}
\right), \ \ 
p = 
\left(
\begin{array}{c}
 	p_c\\
 	p_x\\
 p_s
\end{array}
\right), \ \ 
\Sigma  = 
\left(
\begin{array}{ccc}
 	\Sigma _{cc} & \Sigma _{cx} & \Sigma _{cs}\\
 	\Sigma _{xc} & \Sigma _{xx} & \Sigma _{xs}\\
 	\Sigma _{sc} & \Sigma _{sx} & \Sigma _{ss}
\end{array}
\right). 
\end{eqnarray*}
Note that (\ref {HJB}) is equivalent to 
\begin{eqnarray}\label{HJB2}
\frac{\partial }{\partial t}J + F(z, \mathscr {D}J, \mathscr {D}^2J) = 0. 
\end{eqnarray}
Moreover, put 
\begin{eqnarray*}
\mathscr {U} &=& \left \{ (z, p, \Sigma )\in \tilde{D}\times \mathbb {R}^3\times \mathscr {S}\ ; \ F(z, p, \Sigma ) > -\infty  \right \}, \\
\mathscr {R} &=& \tilde{D}\times (\mathbb {R}^2\times (0, \infty ))\times \mathscr {S}. 
\end{eqnarray*}
Note that $p_s\geq 0$ holds for each $(z, p, \Sigma )\in \mathscr {U}$. 
Moreover, we have $\mathscr {R}\subset \mathscr {U}$. 

\begin{lemma}\label{lem_est_H} 
For each $(z, p, \Sigma )\in \mathscr {R}$, we have 
\begin{eqnarray}\label{est_H}
H(s, p) = 
f(h^{-1}(\mathcal {H}(s, p)\vee h(x_0)) ; s, p) \wedge 0 = f(\Xi (s, p) ; s, p),
\end{eqnarray}
where $\mathcal {H}(s, p)$ and $\Xi (s, p)$ are given by $(\ref {def_H})$--$(\ref {def_Xi})$. 
In particular, $F$ is continuous on $\mathscr {R}$. 
\end{lemma}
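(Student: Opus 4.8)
The plan is to analyze the scalar minimization $H(s,p) = \inf_{y\geq 0} f(y; s, p)$ with $f(y; s, p) = sp_s g(y) - (sp_c - p_x)y$ directly, exploiting the structure of $g$ encoded in [A1]--[A4]. Throughout we work on $\mathscr{R}$, so $p_s > 0$ and $s > 0$; hence $sp_s > 0$ and the indicator $1_{\{sp_s>0\}}$ in $(\ref{def_H})$ is $1$, so $\mathcal{H}(s,p) = (sp_c - p_x)/(sp_s)$ is a genuine real number. Factoring, $f(y; s, p) = sp_s\big(g(y) - \mathcal{H}(s,p)\, y\big)$, so minimizing $f$ over $y\geq 0$ is equivalent to minimizing $\phi(y) := g(y) - \mathcal{H}(s,p)\,y$. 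Note $\phi(0) = g(0) = 0$, which will produce the ``$\wedge 0$'' in $(\ref{est_H})$.

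First I would dispose of the case $\mathcal{H}(s,p) \leq h(\bar{x}_0)$. Since $\phi'(y) = h(y) - \mathcal{H}(s,p)$ and, by [A3], $h$ attains its minimum value $h(\bar{x}_0)$ on $(0,\infty)$, we get $\phi'(y) \geq h(\bar{x}_0) - \mathcal{H}(s,p) \geq 0$ for all $y > 0$ (using [A4] to cover $y$ near $0$ where $h\to\infty$); thus $\phi$ is non-decreasing and the infimum is $\phi(0) = 0$, attained at $y=0$. Here $\mathcal{H}(s,p)\vee h(\bar{x}_0) = h(\bar{x}_0)$, so $h^{-1}(\mathcal{H}(s,p)\vee h(\bar{x}_0)) = \bar{x}_0$; one checks $f(\bar{x}_0; s,p) \wedge 0 = sp_s\phi(\bar{x}_0)\wedge 0 = 0$ since $\phi(\bar{x}_0)\geq\phi(0)=0$, and that $(s,p)\notin\Lambda$ (the condition $\mathcal{H}(s,p) > h(\bar{x}_0)$ fails), so $\Xi(s,p) = 0$ and $f(\Xi(s,p);s,p) = 0$. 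Both equalities in $(\ref{est_H})$ hold.

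Next, the main case $\mathcal{H}(s,p) > h(\bar{x}_0)$. By [A3]--[A4], $h$ restricted to $[\bar{x}_0,\infty)$ is a continuous strictly increasing bijection onto $[h(\bar{x}_0),\infty)$, so there is a unique $y^* := h^{-1}(\mathcal{H}(s,p)) \in (\bar{x}_0,\infty)$ with $\phi'(y^*) = 0$. On $(0,\bar{x}_0]$ we have $h(y) \geq h(\bar{x}_0)$ wait — more carefully, on $(0,\bar x_0]$, $h$ is decreasing from $\infty$ to $h(\bar x_0)$, so $\phi' = h - \mathcal H$ changes sign at most... I would argue instead that $\phi$ is continuous on $[0,\infty)$, $\phi(y)\to\infty$ as $y\to\infty$ (since $h(y)/1 \to\infty$ forces $g$ to grow superlinearly, or more simply $\phi'(y) = h(y) - \mathcal{H} \to \infty$), and the only critical points of $\phi$ on $(0,\infty)$ are solutions of $h(y) = \mathcal{H}(s,p)$; on $(\bar{x}_0,\infty)$ there is exactly one such point $y^*$, which is a local minimum (as $\phi'' = h' > 0$ there by [A3]), and on $(0,\bar{x}_0)$ there is at most one, which is a local maximum. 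Hence the global minimum of $\phi$ over $[0,\infty)$ is attained at $y=0$ or at $y=y^*$, giving $H(s,p) = sp_s\,\phi(y^*)\wedge sp_s\,\phi(0) = f(y^*; s,p)\wedge 0 = f(h^{-1}(\mathcal{H}(s,p)\vee h(\bar{x}_0)); s,p)\wedge 0$, the first equality in $(\ref{est_H})$. For the second equality, observe $\phi(y^*) = g(y^*) - \mathcal{H}(s,p)\,y^* < 0$ precisely when $g(h^{-1}(\mathcal{H}(s,p))) < \mathcal{H}(s,p)\,h^{-1}(\mathcal{H}(s,p))$, i.e.\ exactly the third defining inequality of $\Lambda$; combined with $p_s>0$ and $\mathcal{H}(s,p)>h(\bar{x}_0)$, this is precisely $(s,p)\in\Lambda$. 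So when $(s,p)\in\Lambda$, $\Xi(s,p) = y^*$ and $f(\Xi(s,p);s,p) = f(y^*;s,p) = f(y^*;s,p)\wedge 0$; when $(s,p)\notin\Lambda$ (so $\phi(y^*)\geq 0$), $\Xi(s,p) = 0$ and $f(\Xi(s,p);s,p) = 0 = f(y^*;s,p)\wedge 0$. This yields $(\ref{est_H})$ in all cases.

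Finally, continuity of $F$ on $\mathscr{R}$: the terms $-\tfrac12\hat{\sigma}(s)^2\Sigma_{ss} - \hat{b}(s)p_s$ are manifestly continuous, so it suffices to show $H(s,p)$ is continuous on $\mathscr{R}$. By $(\ref{est_H})$, $H(s,p) = sp_s\big(g(y^\sharp) - \mathcal{H}(s,p)\,y^\sharp\big)\wedge 0$ where $y^\sharp = h^{-1}(\mathcal{H}(s,p)\vee h(\bar{x}_0))$; since $\mathcal{H}$ is continuous on $\mathscr{R}$ (denominator $sp_s > 0$), $h^{-1}$ is continuous on $[h(\bar{x}_0),\infty)$, and $g$ is continuous, the composition is continuous, and the minimum with $0$ preserves continuity. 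The hard part will be the bookkeeping in the main case — verifying that the ``at most one local maximum on $(0,\bar{x}_0)$'' analysis genuinely forces the global minimum to sit at $0$ or $y^*$ (one must rule out the infimum being approached only in a limit, which [A2] and [A4] handle by controlling $\phi$ near $0$ and forcing $\phi'\to+\infty$, respectively) and that the three defining inequalities of $\Lambda$ match up exactly with the condition ``$\phi(y^*) < 0$'' together with the regime $\mathcal{H}(s,p) > h(\bar{x}_0)$, $p_s > 0$.
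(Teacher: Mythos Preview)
Your proposal is correct and follows essentially the same approach as the paper: factor out $sp_s$ to reduce to minimizing $\phi(y)=g(y)-\mathcal{H}(s,p)\,y$, split into the cases $\mathcal{H}(s,p)\le h(\bar{x}_0)$ and $\mathcal{H}(s,p)>h(\bar{x}_0)$, and use the monotonicity structure of $h$ from [A3]--[A4] to locate the minimum at $0$ or at $y^*=h^{-1}(\mathcal{H}(s,p))$. Your treatment of the second case is in fact more explicit than the paper's (you analyze the full critical-point structure of $\phi$, whereas the paper simply asserts the minimum is at $0$ or $h^{-1}(\bar{y})$), and your matching of the $\Lambda$ conditions to $\phi(y^*)<0$ spells out what the paper calls a ``straightforward calculation.''
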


\begin{proof}
First, we note that $H(s, p) = sp_s\bar{H}(\mathcal {H}(s, p))$, where 
\begin{eqnarray*}
\bar{H}(\bar{y}) = \inf _{y\geq 0}\bar{f}(y ; \bar{y}), \ \ \bar{f}(y ; \bar{y}) = g(y) - \bar{y}y. 
\end{eqnarray*}
We see that 
\begin{eqnarray}\label{est_bar_H}
\bar{H}(\bar{y}) = \bar{f}(h^{-1}(\bar{y}\vee h(\bar{x}_0)) ; \bar{y})\wedge 0. 
\end{eqnarray}
Indeed, if $\bar{y}\leq h(\bar{x}_0)$, we observe 
\begin{eqnarray*}
\frac{\partial }{\partial y}\bar{f}(y ; \bar{y}) = h(y) - \bar{y} \geq h(y) - h(\bar{x}_0)\geq 0, \ \ y\geq 0 
\end{eqnarray*}
by [A3]. 
Thus we get $\bar{H}(\bar{y}) = \bar{f}(0 ; \bar{y}) = 0$. 
Moreover, [A3] also implies 
\begin{eqnarray*}
\bar{f}(h^{-1}(h(\bar{x}_0)) ; \bar{y}) = \bar{f}(\bar{x}_0 ; \bar{y}) = 
\int ^{\bar{x}_0}_0h(y')dy' - \bar{y}\bar{x}_0\geq (h(\bar{x}_0) - \bar{y})\bar{x}_0\geq 0; 
\end{eqnarray*}
hence, (\ref {est_bar_H}) holds. 
In contrast, if $\bar{y} > h(\bar{x}_0)$, we see that 
$\bar{f}(\cdot  \ ; \bar{y})$ attains the minimum at $h^{-1}(\bar{y})$ or $0$. 
When $\bar{f}(h^{-1}(\bar{y}) ; \bar{y}) < 0$, it holds that $\bar{H}(\bar{y}) = \bar{f}(h^{-1}(\bar{y}) ; \bar{y})$. 
When $\bar{f}(h^{-1}(\bar{y}) ; \bar{y}) \geq 0$, it holds that $\bar{H}(\bar{y}) = 0$. 
In both cases, we see that (\ref {est_bar_H}) actually holds. 
(\ref {est_bar_H}) implies the first equality of (\ref {est_H}). 
The second equality of (\ref {est_H}) is obtained by a straightforward calculation using [A3]. 
The last assertion is obtained by the continuity of $\hat{b}$, $\hat{\sigma }$, $h$, $\bar{H}$, and $\mathcal {H}(s, p)$. 
\end{proof}

The following proposition is obtained by a standard argument (see \cite {Fleming-Soner, Krylov, Nagai} for details). 

\begin{proposition}\label{prop_supersol}$J$ is the viscosity supersolution of $(\ref {HJB})$. 
\end{proposition}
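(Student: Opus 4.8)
The plan is to verify the supersolution property directly from the dynamic programming principle (DPP), which is already available from the results of \cite{Kato_TJSIAM} reviewed in Section~\ref{sec_model}. Fix $(t_0, z_0) = (t_0, c_0, x_0, s_0) \in (0, T] \times \tilde{D}$ and let $\phi \in C^{1,1,1,2}$ be a test function such that $J - \phi$ attains a local minimum at $(t_0, z_0)$; normalizing, we may assume $J(t_0, z_0) = \phi(t_0, z_0)$ and $J \geq \phi$ in a neighborhood. We must show
\begin{eqnarray*}
\frac{\partial}{\partial t}\phi(t_0, z_0) - \sup_{y \geq 0} \mathscr{L}^y \phi(t_0, z_0) \geq 0.
\end{eqnarray*}
First I would fix an arbitrary constant control $y \geq 0$ and run the controlled process $(C_r, X_r, S_r)_{r}$ of \eqref{notation_SDE} started at $z_0$ with $x_r \equiv y$ on a short interval $[0, \rho]$. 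For small enough $\rho$ the process stays in the neighborhood where $J \geq \phi$ and in particular stays in $\tilde{D}$ (here I use that $z_0 \in \tilde{D}$ and continuity of paths; the drift and diffusion of $S$ are well-behaved near $s_0 > 0$, and $X$ decreases only linearly so $X_\rho > 0$ for small $\rho$). By the DPP restricted to this single admissible control, $\phi(t_0, z_0) = J(t_0, z_0) \geq \E[J(t_0 - \rho, C_\rho, X_\rho, S_\rho)] \geq \E[\phi(t_0 - \rho, C_\rho, X_\rho, S_\rho)]$.

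Next I would apply Itô's formula to $\phi(t_0 - r, C_r, X_r, S_r)$ on $[0, \rho]$. The stochastic integral term has zero expectation after a standard localization argument (using the polynomial growth of $\phi$ and its derivatives together with the moment bounds \eqref{est_S_Z}--\eqref{est_Z_hat}), so taking expectations yields
\begin{eqnarray*}
0 \geq \E\left[\phi(t_0 - \rho, C_\rho, X_\rho, S_\rho) - \phi(t_0, z_0)\right] = \E\left[\int_0^\rho \left(-\frac{\partial}{\partial t}\phi + \mathscr{L}^y \phi\right)(t_0 - r, C_r, X_r, S_r)\, dr\right].
\end{eqnarray*}
Dividing by $\rho$ and letting $\rho \to 0$, the continuity of the integrand (all derivatives of $\phi$ are continuous, and the paths are continuous with the integrand uniformly integrable on the small interval) gives $-\frac{\partial}{\partial t}\phi(t_0, z_0) + \mathscr{L}^y \phi(t_0, z_0) \leq 0$. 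Since $y \geq 0$ was arbitrary, taking the supremum over $y$ yields the desired inequality.

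The main obstacle I anticipate is the justification of the localization step, i.e., that the local martingale part of the Itô expansion is a true martingale (or at least that its expectation vanishes in the limit) and that one may interchange expectation and the $\rho \to 0$ limit. This is delicate here because $g(y)$ can be large and, through the term $-S_r g(y)\, dr$, could in principle push $S$ down rapidly; however, for a \emph{fixed} constant $y$ the coefficient $g(y)$ is just a constant, so $(S_r)_r$ on $[0,\rho]$ is a well-behaved Itô diffusion with bounded-below-by-zero paths dominated by $Z_r(s_0)$ via \eqref{est_S_Z}, and $(C_r, X_r)$ are absolutely continuous with deterministically bounded derivatives. Combined with the polynomial growth of $\phi$ and \eqref{est_Z_hat}, this yields uniform integrability on $[0,\rho]$ and legitimizes all the limiting operations, so the "standard argument" cited before the proposition goes through; I would simply supply these moment estimates explicitly rather than appeal to a black box, since the HJB equation here is nonstandard as noted in Remark after Theorem~\ref{th_viscosity}.
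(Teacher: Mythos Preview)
Your argument is correct and is precisely the standard DPP-to-supersolution derivation that the paper defers to \cite{Fleming-Soner, Krylov, Nagai} without writing out. The only imprecision is the claim that ``for small enough $\rho$ the process stays in the neighborhood'': since $S$ is a genuine diffusion it may exit any neighborhood with positive probability for every $\rho>0$, so one must replace $\rho$ by $\rho\wedge\tau$ with $\tau$ the first exit time from the neighborhood (and from $\tilde D$); but you already invoke localization for the stochastic integral, and with this routine adjustment the proof goes through exactly as you outline.
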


\begin{proposition}\label{prop_subsol}Assume $(\ref {ass_viscosity})$. Then, $J$ is the viscosity subsolution of $(\ref {HJB})$. 
\end{proposition}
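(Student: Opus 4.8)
\textbf{Proof proposal for Proposition \ref{prop_subsol}.}

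The plan is to follow the standard dynamic-programming route for subsolutions, but with the extra care demanded by the degeneracy of $F$ at $p_s = 0$. Let $\phi \in C^{1,1,1,2}$ touch $J$ from above at an interior point $(t_0, z_0) = (t_0, c_0, x_0, s_0) \in (0, T] \times \tilde{D}$, with $J - \phi$ having a strict local maximum there (after the usual perturbation we may assume strictness) and $J(t_0, z_0) = \phi(t_0, z_0)$. I must show
\begin{eqnarray*}
\frac{\partial }{\partial t}\phi(t_0, z_0) - \sup_{y \geq 0}\mathscr{L}^y\phi(t_0, z_0) \leq 0,
\end{eqnarray*}
equivalently $\frac{\partial}{\partial t}\phi(t_0, z_0) + F(z_0, \mathscr{D}\phi(t_0, z_0), \mathscr{D}^2\phi(t_0, z_0)) \leq 0$. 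First I would record the sign information: since $J(r, \cdot) \in \mathcal{C}$ is non-decreasing in $s$ and $J - \phi$ is maximized at $z_0$, one gets $\partial_s\phi(t_0, z_0) \geq 0$; moreover assumption (\ref{ass_viscosity}) is exactly what upgrades this to $\partial_s\phi(t_0, z_0) > 0$ — this is where the hypothesis enters and is the crux of the whole argument, because it places $(z_0, \mathscr{D}\phi, \mathscr{D}^2\phi)$ in the region $\mathscr{R} \subset \mathscr{U}$ on which, by Lemma \ref{lem_est_H}, $F$ is finite and continuous and $H(s_0, p)$ is realized by the explicit minimizer $\Xi(s_0, p)$.

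Next I would invoke the dynamic programming principle. Fixing a constant control $x_r \equiv y \geq 0$ on a short interval and using
\begin{eqnarray*}
J(t_0, z_0) = J(t_0 - \varepsilon, z_0 ; J(\varepsilon, \cdot)) \geq \E\left[J(t_0 - \varepsilon, C_\varepsilon, X_\varepsilon, S_\varepsilon)\right]
\end{eqnarray*}
is not quite the right form; rather, I would use the sub-optimality inequality from the DPP: for any admissible (in particular constant) strategy on $[0, \varepsilon]$, $J(t_0, z_0) \geq \E[J(t_0 - \varepsilon, C_\varepsilon^y, X_\varepsilon^y, S_\varepsilon^y)]$, where the controlled process starts at $z_0$ and uses speed $y$. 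Replacing $J$ by $\phi$ (legitimate near $z_0$ up to a stopping time argument to keep the process in a neighborhood) and applying Itô's formula to $\phi(t_0 - r, C_r^y, X_r^y, S_r^y)$, dividing by $\varepsilon$ and letting $\varepsilon \downarrow 0$ yields
\begin{eqnarray*}
0 \geq -\frac{\partial }{\partial t}\phi(t_0, z_0) + \mathscr{L}^y\phi(t_0, z_0)
\end{eqnarray*}
for every fixed $y \geq 0$. Taking the supremum over $y \geq 0$ gives the required subsolution inequality. The one subtlety is that $\sup_{y \geq 0}\mathscr{L}^y\phi$ could a priori be $+\infty$; but with $\partial_s\phi(t_0, z_0) > 0$ and the growth/coercivity of $g$ built into [A4] (together with $\liminf_{x\to\infty}h(x)/x$ controlling $g(y)$ from below by a superlinear function, so $f(y ; s_0, p) \to +\infty$ as $y \to \infty$), the infimum $H$ is finite and attained, hence $\sup_y \mathscr{L}^y\phi < \infty$, and the pointwise-in-$y$ inequality passes to the supremum cleanly.

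The main obstacle I anticipate is the localization and the passage to the limit in the Itô expansion: one must argue that the controlled process $(C_r^y, X_r^y, S_r^y)$ stays, with high probability and for small $\varepsilon$, inside a compact neighborhood of $z_0$ contained in $\tilde{D}$ where $\phi$ touches $J$ from above, so that $J(t_0 - r, \cdot) \leq \phi(t_0 - r, \cdot) + o(1)$ can be used under the expectation; this requires the moment estimates (\ref{est_S_Z})--(\ref{est_Z_hat}) and uniform integrability to control the error terms, and care that the drift term $-S_r g(y)$ is bounded for fixed $y$. A secondary technical point is justifying the interchange of the $\varepsilon \downarrow 0$ limit with the expectation in the stochastic-integral term, which is the standard martingale-vanishing argument once the integrand is shown to be in $L^2$ on $[0, \varepsilon]$ uniformly. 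None of these is conceptually new — they mirror the proof of Theorem 3.3 in \cite{Kato_FS} — and the only place where S-shapedness (as opposed to convexity) of $g$ matters is in Lemma \ref{lem_est_H}, which has already been established, so the subsolution proof itself goes through essentially verbatim.
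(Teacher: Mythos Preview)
Your argument has the direction backwards. Using a \emph{fixed} constant control $y$ in the dynamic programming principle gives the sub-optimality bound $J(t_0,z_0)\geq \E[J(t_0-\varepsilon,C^y_\varepsilon,X^y_\varepsilon,S^y_\varepsilon)]$, and this is exactly the inequality that produces the \emph{supersolution} property (Proposition~\ref{prop_supersol}), not the subsolution one. Concretely, at a local \emph{maximum} of $J-\phi$ you have $J\leq\phi$ nearby, so ``replacing $J$ by $\phi$'' on the right-hand side is illegitimate: from $\phi(t_0,z_0)=J(t_0,z_0)\geq \E[J(\cdots)]$ and $J\leq\phi$ you cannot conclude $\phi(t_0,z_0)\geq \E[\phi(\cdots)]$. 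And even granting your displayed inequality $0\geq -\partial_t\phi+\mathscr{L}^y\phi$, taking the supremum over $y$ yields $\partial_t\phi-\sup_y\mathscr{L}^y\phi\geq 0$, the wrong sign.

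To obtain the subsolution inequality one must use \emph{near-optimal} controls, and here the real difficulty surfaces: such controls need not be bounded, so the It\^o/localization step and the passage $\mathscr{L}^{x_r}\phi\leq\sup_y\mathscr{L}^y\phi$ may blow up. The paper handles this not by a direct DPP argument but by approximation: it introduces the truncated value functions $J^L$ (controls bounded by $L$), for which the viscosity property with the bounded-control Hamiltonian $F^L$ is standard, and then uses Dini's theorem to pass to the limit. Assumption~(\ref{ass_viscosity}) enters precisely to guarantee $\partial_s v(t,z)>0$, hence $\partial_s v(t_L,z_L)>0$ for large $L$, so that the limiting point lies in $\mathscr{R}$ where Lemma~\ref{lem_est_H} gives continuity of $F$ and uniform convergence $F^L\to F$. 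Your identification of where (\ref{ass_viscosity}) is needed is correct, but the mechanism by which it is used is this stability-under-truncation argument, not a direct DPP computation with constant controls.
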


\begin{proof} 
Fix each $(t, z)\in (0, T]\times \tilde{D}$. 
Let $v\in C^{1, 2}((0, T]\times \tilde{D})$ be a test function, such that 
$J - v$ attains the local maximum, $0$, at $(t, z)$. 
Then, we can find an $r > 0$ such that 
\begin{eqnarray}\label{LM}
J(t', z') < v(t', z')
\end{eqnarray}
holds for each $(t', z')\in \bar{B}_r((t, z)) \setminus \{(t , z)\}$, where 
\begin{eqnarray*}
\bar{B}_r((t, z)) = \{ (t', z')\in (0, T]\times \tilde{D}\ ; \ |t'-t|^2 + |z' - z|^2 \leq r^2 \}. 
\end{eqnarray*}

For each $L > 0$, define 
\begin{eqnarray*}
F^L(z, p, \Sigma ) &=& -\frac{1}{2}\hat{\sigma }(s)^2\Sigma _{ss} - \hat{b}(s)p_s + \inf _{0\leq y\leq L}f(y ; s, p), \\
J^L(t, c, x, s) &=& \sup _{(x_r)_r\in \mathcal {A}^L_t(x)}\E [u(C_t, X_t, S_t)], \\
\mathcal {A}^L_t(x) &=& \{ (x_r)_r\in \mathcal {A}_t(x)\ ; \ |x_r|\leq L \}. 
\end{eqnarray*}
Here, $(C_r, X_r, S_r)_r$ is given as (\ref {notation_SDE}) and $(C_0, X_0, S_0) = (c, x, s)$. 
Note that $J^L(t, c, x, s)\allowbreak \nearrow J(t, c, x, s)$, $L\rightarrow \infty $. 
By the same argument as Proposition B.18 in \cite {Kato_FS}, we see that $J^L$ is a viscosity solution of 
\begin{eqnarray}\label{HJB_L}
\frac{\partial }{\partial t}J^L + F^L(z, \mathscr {D}J^L, \mathscr {D}^2J^L) = 0. 
\end{eqnarray}

Because $J^L - v$ is continuous, $J^L - v$ attains a maximum on the set $\bar{B}_r((t, z))$; namely, 
there is a $(t_L, z_L)\in \bar{B}_r((t, z))$ such that $\max _{\bar{B}_r(t, z)} = J^L(t_L, z_L) - v(t_L, z_L)$. 
We show that 
\begin{eqnarray}\label{conv_tz}
(t_L, z_L)\longrightarrow (t, z), \ \ L\rightarrow \infty . 
\end{eqnarray} 

Because $\{ (t_L, z_L) \} _L$ is a bounded sequence, we see that 
for each increasing sequence, $(L_n)_n\subset (0, \infty )$, there is a subsequence, $(L_{n_k})_k$, such that 
$(t_{L_{n_k}}, z_{L_{n_k}})$ converges to a point, $(t^*, z^*)\in \bar{B}_r(t, z)$. 
Dini's theorem implies that $J^L \longrightarrow J$, $L\rightarrow \infty $ is uniform convergence on any compact set; thus, we see that 
\begin{eqnarray*}
J(t^*, z^*) - v(t^*, z^*) = \lim _{k\rightarrow \infty }(J^{L_{n_k}}(t_{L_{n_k}}, z_{L_{n_k}}) - v((t_{L_{n_k}}, z_{L_{n_k}}))) = 0. 
\end{eqnarray*}
Combining this with (\ref {LM}), we conclude that $(t^*, z^*)$ must coincide with $(t, z)$. 
Therefore, (\ref {conv_tz}) is true. 

Next, we define $\tilde{v}\in C^{1, 2}((0, T]\times \tilde{D})$ by 
\begin{eqnarray*}
\tilde{v}(t', z') = v(t', z') + J^L(t_L, z_L) - v(t_L, z_L). 
\end{eqnarray*}
Then, we see that $J^L - \tilde{v}$ attains a local maximum, $0$, at $(t_L, z_L)$. 
Moreover, because $J^L$ is a viscosity solution to (\ref {HJB_L}), it holds that 
\begin{eqnarray}\nonumber 
&&\frac{\partial }{\partial t}\tilde{v}(t_L, z_L) + F^L(z_L, \mathscr {D}\tilde{v}(t_L, z_L), \mathscr {D}^2\tilde{v}(t_L, z_L)) \\\label{est_L2}
&=& 
\frac{\partial }{\partial t}v(t_L, z_L) + F^L(z_L, \mathscr {D}v(t_L, z_L), \mathscr {D}^2v(t_L, z_L))
\leq 0. 
\end{eqnarray}
Note that $(z_L, \mathscr {D}v(t_L, z_L), \mathscr {D}^2v(t_L, z_L))\in \mathscr {R}$ holds for large enough $L$. 
Indeed, (\ref {ass_viscosity}) implies that $(\partial /\partial s)v(t, z) > 0$, 
and the convergence $(t_L, z_L)\longrightarrow (t, z)$ and the continuity of $\mathscr {D}v$ lead us to 
$(\partial /\partial s)v(t_L, z_L) > 0$ for large enough $L$. 
Moreover, using Lemma \ref {lem_est_H} and Dini's theorem again, we see that 
$F^L$ converges to $F$ as $L\rightarrow \infty$ uniformly on any compact set in $\mathscr {R}$. 
Therefore, taking $L\rightarrow \infty $ in (\ref {est_L2}), we arrive at 
\begin{eqnarray*}
\frac{\partial }{\partial t}v(t, z) + F(z, \mathscr {D}v(t, z), \mathscr {D}^2v(t, z)) \leq 0. 
\end{eqnarray*}
This completes the proof. 
\end{proof}

\begin{proof}[Proof of Theorem \ref {th_viscosity}]
Assertion (i) is a consequence of Propositions \ref {prop_supersol}--\ref {prop_subsol}. 
Assertion (ii) is obtained by the same arguments as the proofs of Propositions B.21--B.23 in \cite {Kato_FS}. 
We note that Proposition B.22 of \cite{Kato_FS} requires the convexity of $g$ only on $[x_1, \infty )$ for large enough $x_1$. 
\end{proof}

We prepare the following lemma to show Theorem \ref {th_S-shaped}. 

\begin{lemma}\label{lem_S-shaped} \ Assume {\rm [B]} and that $J\in C^{1,1,1,2}((0, T]\times D)$. 
It holds that 
\begin{eqnarray}
\label{diff_c}
&&\frac{\partial }{\partial c}J(t, c, 0, s) = U'(c), \\\label{diff_x}
&&\frac{\partial }{\partial x}J(t, c, 0, s) \geq  sU'(c), \\\label{diff_s}
&&\frac{\partial }{\partial s}J(t, c, 0, s) = \frac{\partial }{\partial x}J(t, c, x, 0) = 0, \\\label{diff_t}
&&\frac{\partial }{\partial t}J(t, c, 0, s) = \frac{\partial }{\partial t}J(t, c, x, 0) = 0 
\end{eqnarray}
for each $t > 0$ and $(c, x, s)\in D$. 
\end{lemma}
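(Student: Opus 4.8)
The plan is to read off each identity from the definition of $J$ by exploiting the boundary structure of the state dynamics when $x=0$ or $s=0$, and then differentiate. When $X_0 = x = 0$, the constraint $\int_0^t x_r dr \le 0$ together with $x_r \ge 0$ forces $x_r \equiv 0$ a.s., so the only admissible strategy is the null one; hence $C_t = c$, $X_t = 0$, $S_t = Z_t(s)$ and $J(t,c,0,s) = \E[U(c)] = U(c)$ under [B], using Remark on the simplified boundary conditions. Wait — under [B], $u(c,x,s) = U(c)$ does not depend on $s$, so in fact $J(t,c,0,s) = U(c)$ for all $s$, which immediately gives $\frac{\partial}{\partial c}J(t,c,0,s) = U'(c)$, $\frac{\partial}{\partial s}J(t,c,0,s) = 0$ and $\frac{\partial}{\partial t}J(t,c,0,s) = 0$. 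That settles \eqref{diff_c}, half of \eqref{diff_s}, and half of \eqref{diff_t}.

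Next I would handle the $s = 0$ boundary. When $S_0 = s = 0$, the comparison estimate \eqref{est_S_Z} with $Z_0(0) = 0$ and the fact that $\hat b(0) = \hat\sigma(0) = 0$ force $S_r \equiv 0$ a.s.\ for any admissible strategy, whence $C_t = c + \int_0^t x_r S_r dr = c$ as well. Therefore $u(C_t, X_t, S_t) = U(c)$ regardless of the strategy, so $J(t,c,x,0) = U(c)$ for all $x$ and $t$, which yields $\frac{\partial}{\partial x}J(t,c,x,0) = 0$ and $\frac{\partial}{\partial t}J(t,c,x,0) = 0$, completing \eqref{diff_s} and \eqref{diff_t}.

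The only genuinely nontrivial assertion is the inequality \eqref{diff_x}. Here the idea is to compare the state $x = 0$ with a small positive initial holding $x = \varepsilon$, using a specific suboptimal strategy to get a lower bound on $J(t,c,\varepsilon,s)$. The natural choice is a quasi-block liquidation: sell $\varepsilon$ shares over a vanishingly short interval $[0,\delta]$ at rate $\varepsilon/\delta$, i.e.\ $x_r = (\varepsilon/\delta)1_{[0,\delta]}(r)$, and then do nothing. For this strategy one tracks $C_t$, $X_t = 0$, $S_t$: the cash gained is $\int_0^\delta (\varepsilon/\delta) S_r dr \approx \varepsilon S_0 = \varepsilon s$ as $\delta \to 0$ (the MI term only depresses the price during $[0,\delta]$ and contributes at second order), and after time $\delta$ the price evolves as a diffusion with $X \equiv 0$. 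Since $U$ is concave and $C^1$, and because $J(t,c,0,s)$ is known exactly, one estimates
\begin{eqnarray*}
J(t,c,\varepsilon,s) - J(t,c,0,s) \ge \E[U(C_t)] - U(c) \ge U'(c)\,\varepsilon s + o(\varepsilon),
\end{eqnarray*}
where the last step uses $\E[C_t] \ge c + \varepsilon s + o(\varepsilon)$ (the cash is collected early, and $\tilde b, \tilde\sigma$ need not preserve the mean, but one only needs a lower bound up to $o(\varepsilon)$; concavity of $U$ plus a Taylor expansion handles the passage from $\E[C_t]$ to $\E[U(C_t)]$ — more carefully, $U(C_t) \ge U(c) + U'(c)(C_t - c)$ by concavity, so $\E[U(C_t)] \ge U(c) + U'(c)\E[C_t - c]$, and it suffices to show $\E[C_t - c] \ge \varepsilon s + o(\varepsilon)$, which follows from continuity of $(S_r)_r$ at $r = 0$ and boundedness estimates from \eqref{est_S_Z}–\eqref{est_Z_hat}). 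Dividing by $\varepsilon$ and letting $\varepsilon \downarrow 0$ gives $\frac{\partial}{\partial x}J(t,c,0,s) \ge s U'(c)$, which is \eqref{diff_x}.

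The main obstacle is the careful control of the quasi-block liquidation in the proof of \eqref{diff_x}: one must verify that the cash collected over $[0,\delta]$ converges to $\varepsilon s$ uniformly enough (in $\varepsilon$, after a suitable joint choice $\delta = \delta(\varepsilon)$) and that the MI-induced price depression and the drift/volatility over the residual interval $[\delta, t]$ only perturb $\E[C_t]$ by $o(\varepsilon)$. This is where the moment bounds \eqref{est_S_Z}–\eqref{est_Z_hat} and the boundedness of $b,\sigma$ (hence of $\hat b(s)/s$, $\hat\sigma(s)/s$) enter; everything else reduces to the elementary observation that $x = 0$ or $s = 0$ collapses the admissible set or the dynamics, making $J$ explicitly equal to $U(c)$ on those faces.
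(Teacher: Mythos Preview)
Your treatment of \eqref{diff_c}, \eqref{diff_s}, and \eqref{diff_t} is correct and identical to the paper's: one simply reads off $J(t,c,0,s)=J(t,c,x,0)=U(c)$ from the collapse of the admissible set (when $x=0$) or of the price dynamics (when $s=0$), and differentiates. Your remark that under {\rm [B]} the cash does not change on $[\delta,t]$ once $x_r\equiv 0$ also means your worry about ``drift/volatility over the residual interval perturbing $\E[C_t]$'' is moot: $C_t=C_\delta$ exactly.

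For \eqref{diff_x} the overall plan (bound $J(t,c,\varepsilon,s)-U(c)$ from below by plugging in an explicit strategy that liquidates $\varepsilon$ quickly) matches the paper, and your use of the concavity tangent line $U(C_t)\ge U(c)+U'(c)(C_t-c)$ is a legitimate shortcut compared with the paper's integral mean-value representation $U(C_t)-U(c)=\int_0^1 U'(c+k(C_t-c))\,dk\,(C_t-c)$. The real discrepancy is in the choice of strategy. You propose a \emph{quasi-block} liquidation $x_r=(\varepsilon/\delta)1_{[0,\delta]}(r)$ and speak of $\delta\to 0$; but with $g$ superlinear at infinity (from {\rm [A3]}--{\rm [A4]}), the log-price drops by roughly $g(\varepsilon/\delta)\,\delta$, which blows up if $\varepsilon/\delta\to\infty$. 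So taken literally the quasi-block picture fails. You flag ``a suitable joint choice $\delta=\delta(\varepsilon)$'' but never say which; any choice that keeps the rate $\varepsilon/\delta$ bounded (e.g.\ $\delta=\varepsilon$) rescues the argument, but then it is no longer block-like. The paper sidesteps this entirely by taking the rate itself to zero: it uses $x_r=\sqrt{x}\,1_{[0,\sqrt{x}]}(r)$, so both the duration and the rate equal $\sqrt{x}\to 0$, and the MI contribution $g(\sqrt{x})\sqrt{x}\to 0$ automatically by continuity of $g$ at $0$. With this choice one sets $A_x=\frac{1}{\sqrt{x}}\int_0^{\sqrt{x}}S_r\,dr$, shows $A_x\to s$ in probability via the moment bounds \eqref{est_S_Z}--\eqref{est_Z_hat}, and passes to the limit by dominated convergence. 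The moral: for S-shaped (in particular superlinearly growing) $g$, the right test strategy is a \emph{slow} liquidation with vanishing rate, not a fast one with vanishing duration.
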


\begin{proof}
(\ref {diff_c}), (\ref {diff_s}) and (\ref {diff_t}) are obtained from $J(t, c, 0, s) = J(t, c, x, 0) = U(c)$. 

To show (\ref {diff_x}), for each fixed $t > 0$ and each $x\in (0, t^2)$, 
set $x_r = \sqrt{x}1_{[0, \sqrt{x}]}(r)$ and let $(C_r, X_r, S_r)_r$ be the controlled process associated with $(x_r)_r\in \mathcal {A}_t(x)$. 
Then, we see that 
\begin{eqnarray}\nonumber 
&&\frac{1}{x}(J(t, c, x, s) - J(t, c, 0, s))\\\label{Cl1}
&\geq &
\frac{1}{x}\E [U(C_t) - U(c)] = \E \left[ \int ^1_0U'(c + kxA_x)dkA_x\right] , 
\end{eqnarray}
where $A_x = \frac{1}{\sqrt{x}}\int ^{\sqrt{x}}_0S_rdr$. 
By the Doob inequality, (3.18) in \cite {Ishitani-Kato_COSA1}, and (\ref {est_S_Z})--(\ref {est_Z_hat}), we have 
$\E [|A_x - s|] \longrightarrow 0$, \ $x\rightarrow 0$. 
In particular, $A_x$ converges to $s$ in probability. 
Moreover, by (\ref {est_S_Z})--(\ref {est_Z_hat}) and the concavity of $U$, we have 
\begin{eqnarray*}
\E [\sup _{0\leq x\leq t^2}A^2_x] < \infty , \ \ 
\E [\sup _{0\leq k\leq 1, 0\leq x\leq t^2}(U'(c + kxA_x))^2] \leq (U'(c))^2. 
\end{eqnarray*}
Therefore, we can apply the dominated convergence theorem to get 
\begin{eqnarray}\label{Cl2}
\E \left[ \int ^1_0U'(c + kxA_x)dkA_x\right] \longrightarrow sU'(c), \ \ x\rightarrow 0. 
\end{eqnarray}
(\ref {Cl1})--(\ref {Cl2}) lead us to (\ref {diff_x}). 
\end{proof}

\begin{proof}[Proof of Theorem \ref {th_S-shaped}]
Define 
\begin{eqnarray}\label{def_x_hat}
\hat{x}_t = \Xi (\overline {S}_t, \mathscr {D}J(T - t, \overline {C}_t, \overline {X}_t, \overline {S}_t))1_{[0, \bar{\tau })}(t). 
\end{eqnarray}
Because $\overline {X}_{T\wedge \bar{\tau }}\geq 0$, it holds that $(\hat{x}_t)_t\in \mathcal {A}_T(x_0)$. 
Then, Proposition \ref {prop_unique_existence} implies that there is a controlled process $(\hat{C}_t, \hat{X}_t, \hat{S}_t)_t$ associated with $(\hat{x}_t)_t$. 
We see that $\hat{C}_t = \overline {C}_{t\wedge \bar{\tau }}$, $\hat{X}_t = \overline {X}_{t\wedge \bar{\tau }}$ and $\hat{S}_{t\wedge \bar{\tau }} = \overline {S}_{t\wedge \bar{\tau }}$. 

Put $\hat{\tau }_R = \inf \{ t\geq 0 \ ; \ \hat {S}_{t} \geq R \}\wedge (T-1/R)_+$ for each $R > 0$. 
Note that (\ref {est_S_Z})--(\ref {est_Z_hat}) and the Chebyshev inequality imply that 
$\hat{\tau }_R\nearrow T$, $R\rightarrow \infty $. 
Ito's formula gives us
\begin{eqnarray}\nonumber 
&&\E [J(T - \hat{\tau }_R, \hat{C}_{\hat{\tau }_R}, \hat{X}_{\hat{\tau }_R}, \hat{S}_{\hat{\tau }_R})] - J(T, c_0, x_0, s_0) \\\label{est_S-shaped1}
&=& 
\E \left [ \int ^{\hat{\tau }_R}_0\left(  
-\frac{\partial }{\partial t}J + \mathscr {L}^{\hat{x}_t}J\right) (T - t, \hat{C}_t, \hat{X}_t, \hat{S}_t)dt\right] . 
\end{eqnarray}
Based on Theorem \ref {th_viscosity}(i), Lemma \ref {lem_est_H}, and the smoothness of $J$, we have 
\begin{eqnarray}\nonumber 
&&\left( 
-\frac{\partial }{\partial t}J + \mathscr {L}^{\hat{x}_t}J\right) (T - t, \hat{C}_t, \hat{X}_t, \hat{S}_t)\\\label{est_S-shaped2}
&=& 
\left( 
-\frac{\partial }{\partial t}J + \sup _{y\geq 0}\mathscr {L}^yJ\right) (T - t, \overline {C}_t, \overline {X}_t, \overline {S}_t) = 0, \ \ t < \bar{\tau }. 
\end{eqnarray}
On $\{ t\geq \bar{\tau } \}$, either $\hat{X}_t = 0$ or $\hat{S}_t = 0$ holds. 
If $\hat{X}_t = 0$, we have 
\begin{eqnarray}\nonumber 
&&\left( 
-\frac{\partial }{\partial t}J + \sup _{y\geq 0}\mathscr {L}^yJ\right) (T - t, \hat{C}_t, 0, \hat{S}_t)\\\nonumber 
&=& 
\sup _{y\geq 0}\left\{ \left( \hat{S}_tU'(\hat{C}_t) - \frac{\partial }{\partial x}J(T - t, \hat{C}_t, 0, \hat{S}_t) \right) y\right\} \\\label{est_S-shaped3}
&=& 
0 = 
\left( 
-\frac{\partial }{\partial t}J + \mathscr {L}^{\hat{x}_t}J\right) (T - t, \hat{C}_t, 0, \hat{S}_t) , \ \ t\geq \bar{\tau }
\end{eqnarray}
by Lemma \ref {lem_S-shaped}. 
If $\hat{S}_t = 0$, Lemma \ref {lem_S-shaped} also implies that 
\begin{eqnarray}\nonumber 
&&\left( 
-\frac{\partial }{\partial t}J + \sup _{y\geq 0}\mathscr {L}^yJ\right) (T - t, \hat{C}_t, \hat{X}_t, 0)\\\label{est_S-shaped4}
&=& 
0 = 
\left( 
-\frac{\partial }{\partial t}J + \mathscr {L}^{\hat{x}_t}J\right) (T - t, \hat{C}_t, \hat{X}_t, 0), \ \ t\geq \bar{\tau }.  
\end{eqnarray}
Combining (\ref {est_S-shaped1})--(\ref {est_S-shaped4}), we arrive at 
$\E [J(T - \hat{\tau }_R, \hat{C}_{\hat{\tau }_R}, \hat{X}_{\hat{\tau }_R}, \hat{S}_{\hat{\tau }_R})] = J(T, c_0, x_0, s_0)$. 
Letting $R\rightarrow \infty $, we have 
$\E [u(\hat{C}_T, \hat{X}_T, \hat{S}_T)] = J(T, c_0, x_0, s_0)$ 
due to the dominated convergence theorem. 
Therefore, $(\hat{x}_t)_t$ is the optimizer to $J(T, c_0, x_0, s_0)$. 
Based on (\ref {def_Xi}) and (\ref {def_x_hat}), we see that $\hat{x}_t = 0$ or $\hat{x}_t > \bar{x}_0$. 
\end{proof}

\begin{proof}[Proof of Theorem \ref {th_verification}]
Fix any $(x_t)_t\in \mathcal {A}^\infty _T(x_0)$ and $n\in \mathbb {N}$, and let $x^n_t = (1-1/n)x_t$. 
Denote by $(C_t, X_t, S_t)_t$ (resp., $(C^n_t, X^n_t, S^n_t)$) 
the controlled process associated with $(x_t)_t$ (resp., $(x^n_t)_t$). 
Note that (\ref {est_S_Z})--(\ref {est_Z_hat}) imply 
$\E [\sup _{0\leq t\leq T}(S^n_t)^m] < \infty $ 
for each $m > 0$, and that $(C^n_t, X^n_t, S^n_t) \in \tilde{D}$, $t\in [0, T]$ holds a.s. 

Take any $R > 0$ and set $\tau _R = \inf \{ t\geq 0 \ ; \ S^n_t\geq R\} \wedge (T-1/R)_+$. 
By a standard argument using Ito's formula, we arrive at 
\begin{eqnarray*}
&&\E [v(T - \tau _R, C^n_{\tau _R}, X^n_{\tau _R}, S^n_{\tau _R})] - v(T, c_0, x_0, s_0)\\
&\leq & 
\E \left[ \int ^{\tau _R}_0\left( -\frac{\partial }{\partial t}v + \sup _{y\geq 0}\mathscr {L}^yv \right) 
(T - t, C^n_t, X^n_t, S^n_t)dt \right] . 
\end{eqnarray*}
Combining this with assumption (iii), we have 
\begin{eqnarray*}
\E [v(T - \tau _R, C^n_{\tau _R}, X^n_{\tau _R}, S^n_{\tau _R})] \leq  v(T, c_0, x_0, s_0). 
\end{eqnarray*}
Here, based on assumptions (i)--(ii) and the Chebyshev inequality, we see that 
$\tau _R\nearrow T$, $R\rightarrow \infty $ a.s.~and 
\begin{eqnarray}\label{est_verification3}
\E [u(C^n_T, X^n_T, S^n_T)] \leq v(T, c_0, x_0, s_0). 
\end{eqnarray}
Because $(x_t)_t$ is essentially bounded, by using Theorem 2.5.9 in \cite {Krylov}, we obtain 
$\E [\sup _{0\leq t\leq T}|\log S^n_t - \log S_t|^4] \longrightarrow 0$, \ $n\rightarrow \infty $. 
Then, we have $\E [\sup _{0\leq t\leq T}|S^n_t - S_t|^2] \longrightarrow 0$, and thus 
$\E [|C^n_T - C_T|]\longrightarrow 0$. 
Moreover, we see that $\E [|X^n_T - X_T|]\longrightarrow 0$. 
Therefore, letting $n\rightarrow \infty $ in (\ref {est_verification3}) 
and applying Lemma B.2 in \cite {Kato_FS}, we arrive at 
\begin{eqnarray*}
\E [u(C_T, X_T, S_T)] \leq v(T, c_0, x_0, s_0). 
\end{eqnarray*}
Because $(x_t)_t\in \mathcal {A}^\infty _T(x_0)$ is arbitrary, we deduce that 
\begin{eqnarray*}
J(T, c_0, x_0, s_0) = J^\infty (T, c_0, x_0, s_0) \leq v(T, c_0, x_0, s_0). 
\end{eqnarray*}
This and assumption (iv) lead to the conclusion that 
\begin{eqnarray*}
J(T, c_0, x_0, s_0) = \E [u(\hat{C}_T, \hat{X}_T, \hat{S}_T)] = v(T, c_0, x_0, s_0).  \qedhere 
\end{eqnarray*}
\end{proof}

Theorems \ref {th_mixed_power} and \ref {th_TWAP} are obtained by a straightforward calculation using Theorem \ref {th_verification}. 

\begin{proof}[Proof of Theorem \ref {th_TWAP_Levy}]
Theorem 5.2 in \cite {Ishitani-Kato_COSA2} tells us that (\ref {def_J_Levy}) is equivalent with the optimization problem
\begin{eqnarray*}
c_0 + \sup _{(x_t)_t\in \mathcal {A}_T(x_0)}\int ^T_0\hat{S}_tx_tdt, 
\end{eqnarray*}
where 
\begin{eqnarray*}
&&d\hat{S}_t = -\hat{S}_t(\tilde{\mu } + \hat{g}(x_t))dt, \ \ \hat{S}_0 = s_0, \\
&&\hat{g}(x) = \gamma \alpha _0x^2 + \alpha _1\log (\alpha _0\beta _1x^2 + 1). 
\end{eqnarray*}
Moreover, we see that 
\begin{eqnarray*}
\hat{g}''(x) \geq  \frac{\alpha _0\alpha _1\beta _1(\alpha _0\beta _1x^2 - 3)^2}{4(\alpha _0\beta _1x^2 + 1)^2}, 
\end{eqnarray*}
hence, $\hat{g}(x)$ is strictly convex (see Corollary 1.3.10 in \cite {Niculescu-Persson} for instance). 
Therefore, we can apply Theorem \ref {th_TWAP} to complete the proof. 
The optimal execution speed $\hat{\nu }$ satisfies $G_{\hat{g}'}(\hat{\nu }) = \tilde{\mu }$, where 
\begin{eqnarray*}
G_{\hat{g}'}(x) = x\hat{g}'(x) - \hat{g}(x) = 
\gamma \alpha _0x^2 + 
\alpha _1\left\{ 2\left( 1 - \frac{1}{1 + \alpha _0\beta _1x^2}\right) - \log (\alpha _0\beta _1x + 1) \right\} .   \qedhere 
\end{eqnarray*}
\end{proof}

\end{document}